\documentclass[lettersize,journal]{IEEEtran}

\usepackage{array}
\usepackage[caption=false,font=normalsize,labelfont=sf,textfont=sf]{subfig}
\usepackage{textcomp}
\usepackage{url}
\usepackage{verbatim}

\usepackage{amsmath,amssymb,amsfonts,mathtools,bm,color}
\usepackage{cite}
\usepackage{graphicx}
\usepackage{algorithm}
\usepackage{algpseudocode}
\usepackage{cases}

\newtheorem{theorem}{Theorem}
\newtheorem{proposition}{Proposition}

\newtheorem{lemma}{Lemma}
\newtheorem{remark}{Remark}
\newtheorem{assumption}{Assumption}
\newtheorem{Fact}{Fact}

\newcommand{\Hcal}{\mathcal H}
\newcommand{\Ycal}{\mathcal Y}
\newcommand{\Scal}{\mathcal S}
\newcommand{\Qcal}{\mathcal Q}
\newcommand{\Acal}{\mathcal A}
\newcommand{\Rcal}{\mathcal R}

\newcommand{\Zcal}{\mathcal Z}
\newcommand{\Bcal}{\mathcal B}
\newcommand{\Pcal}{\mathcal P}
\newcommand{\Ical}{\mathcal I}
\newcommand{\Ucal}{\mathcal U}
\newcommand{\Tr}{\operatorname{Tr}}
\newcommand{\rank}{\operatorname{rank}}
\newcommand{\ran}{\operatorname{range}}
\newcommand{\spann}{\operatorname{span}}
\newcommand{\argmax}{\operatorname*{arg\,max}}
\newcommand{\norm}[1]{\left\lVert #1\right\rVert}
\newcommand{\inner}[2]{\left\langle #1,#2\right\rangle}
\def\BibTeX{{\rm B\kern-.05em{\sc i\kern-.025em b}\kern-.08em
    T\kern-.1667em\lower.7ex\hbox{E}\kern-.125emX}}
\usepackage{balance}
\begin{document}
\title{CAPA Beamforming for Extended-Target ISAC: Finite-Rank Solution Structure and Adaptive Electromagnetic-Mode Enrichment}
\author{Hongli Liu and Qiang Li
	\thanks{H. Liu and Q. Li are with School of Information and Communication  Engineering, University of Electronic Science and Technology of China, Chengdu, P.~R.~China, 611731. ({\it Corresponding author: Q. Li, E-mail: lq@uestc.edu.cn.})
	}
}


\maketitle

\begin{abstract}
Continuous-aperture arrays (CAPAs) provide a new electromagnetic transmission paradigm by directly controlling the source-current distribution over a continuous radiating aperture. This paper studies CAPA-enabled integrated sensing and communication (ISAC) with spatially extended sensing targets, for which the electromagnetic response over each target region is characterized by an aperture-to-region operator. Unlike point-target models, the resulting illumination operator may have infinite rank, precluding a priori finite-dimensional reduction of the transmit design. We formulate the joint communication and target-illumination beamforming problem in the continuous electromagnetic domain and develop an operator semidefinite relaxation (SDR). A key structural result is that the optimal transmit covariance operators remain finite rank even when the aggregate illumination operator has infinite rank. We further establish the tightness of the operator SDR and provide a constructive rank-one recovery of the information beamformers. Since the finite electromagnetic subspace supporting the optimum is generally unknown, we develop a dual-slack-driven adaptive mode-enrichment method that discovers the required modes from the full-space optimality conditions. The dual-slack violation simultaneously identifies missing modes and yields a computable certificate of the remaining full-space optimality gap. Global convergence of the enrichment sequence to the full-space optimum is established. Numerical results validate the finite-rank optimal structure and show that the proposed method approaches the full-space optimum with substantially fewer electromagnetic modes than truncated-SVD and spatial-Fourier representations.
\end{abstract}

\begin{IEEEkeywords}
CAPA, ISAC, extended target, infinite-dimensional optimization, semidefinite relaxation, electromagnetic-mode enrichment.
\end{IEEEkeywords}

\vspace{-5pt}
\section{Introduction}
\label{sec:introduction}
Multiple-antenna systems have evolved toward increasingly rich spatial degrees of freedom. This trend has driven wireless transceivers from conventional MIMO to massive and extremely large-scale arrays, and more recently toward densely integrated electromagnetic apertures~\cite{wang2024tutorial}. Continuous-aperture arrays (CAPAs) represent a natural limit of this evolution by replacing discrete antenna-port excitation with spatially continuous control of the source-current distribution over an electrically large aperture~\cite{ouyang2025diversity,liu2025capa}. Such continuous current control can be approached physically through densely coupled radiating structures, including programmable metasurfaces and related electrically, optically, or acoustically driven architectures~\cite{rezvani2025energy,zhao2025continuous,liu2025capa}. By controlling the current distribution over the radiating surface, CAPAs provide rich degrees of freedom for electromagnetic-field synthesis and offer a promising architecture for communication and sensing systems.

This transition from discrete arrays to continuous apertures fundamentally changes the mathematical nature of transmit design. Conventional beamforming optimizes finite-dimensional vectors or matrices, whereas CAPA beamforming is naturally posed as a functional optimization problem over an infinite-dimensional space, for which the matrix-based optimization tools developed for discrete arrays are no longer directly applicable. A common remedy is to first obtain a finite-dimensional representation of the continuous-domain problem, either by truncating or approximating the electromagnetic response, or by identifying an exact finite-dimensional subspace that preserves optimality. Existing CAPA communication studies have explored both directions. Prescribed Fourier-, wavenumber-, and pattern-domain representations have been used to obtain tractable finite-dimensional descriptions of spatially continuous apertures~\cite{pizzo2022fourier,sanguinetti2023wavenumber,zhang2023pattern}, while continuous beamformers have also been optimized directly in function space~\cite{wang2025beamforming,wang2025continuous,ranasinghe2026doubly}. Of particular relevance are recent subspace results for multiuser CAPA beamforming, which show that the optimal beamformers can be confined to the span of a finite number of user electromagnetic responses~\cite{wang2025optimal,guo2026deep,zhang2026transformer}. These results show that, although the aperture itself is continuous and the ambient design space is infinite dimensional, an exact finite-dimensional reduction is possible when the optimization depends only on a finite-rank electromagnetic response.

Recent CAPA-enabled integrated sensing and communication (ISAC) studies largely inherit this favorable structure. Existing designs have addressed communication--sensing tradeoffs, continuous-domain signal processing, multi-target illumination, and symbol-level waveform optimization~\cite{zhang2025exploiting,zhang2025RateCRB,zhao2026downlink,zhao2026continuous,ye2026optimal,liu2026symbol}. In most of these formulations, sensing is specified through a finite collection of target locations, directions, or spatial responses. Together with the finite number of communication users, these responses induce a finite-dimensional electromagnetic subspace within which the continuous-aperture optimization can be carried out.

Spatially extended targets lead to a different setting. At sufficiently high sensing resolution, different parts of a target give rise to resolvable scattering responses, so the conventional point-target abstraction becomes inadequate and the target extent needs to be explicitly modeled~\cite{Garcia2022crb}. Extended-target ISAC has consequently received increasing attention~\cite{du2023integrated,wang2024cramer,wang20263d,zhao2026crbET}, but existing studies mainly rely on discrete arrays or finite-dimensional target parameterizations. For a CAPA, retaining the electromagnetic field over an entire target region leads to an aperture-to-region operator. The associated illumination operator can have infinite rank, so the finite response subspace available in point-target formulations no longer exists a priori. This removes the structural shortcut that underpins many existing CAPA beamforming methods and leaves open a basic question: whether an infinite-rank sensing response necessarily requires an infinite-dimensional transmit solution.

This paper answers this question for CAPA-enabled extended-target ISAC. We formulate the joint communication and target-illumination design directly in the continuous electromagnetic domain and show that, despite the infinite-rank response, the optimal transmit covariance operators retain a finite-rank structure. As such, the difficulty of the transmit optimization shifts from representing infinitely many electromagnetic modes to identifying the unknown finite-dimensional modal support of the optimum. 
To this end, we develop an adaptive electromagnetic-mode enrichment framework based on an operator semidefinite relaxation (SDR), where the restricted problem is solved over the current modal space and the resulting full-space optimality information is used to select additional electromagnetic directions. The resulting sequence enriches the search space with optimization-relevant electromagnetic directions and converges to the full-space optimum.

The main contributions are summarized as follows.

\begin{itemize}
	
	\item We introduce a continuous-domain CAPA-ISAC formulation for spatially extended targets, where the electromagnetic field over each target region is modeled through an aperture-to-region operator. This moves beyond the finite collection of point responses considered in many existing CAPA-ISAC formulations and gives rise to an infinite-rank illumination regime for which no exact finite-dimensional response space is available a priori.
	
	\item We reveal that an infinite-rank illumination response does not imply an infinite-dimensional optimal transmission. The optimal transmit covariance operators remain finite rank, and the operator SDR is shown to be tight with a constructive recovery of globally optimal rank-one information beamformers.
	
	\item We develop a dual-slack-driven electromagnetic-mode enrichment method that recovers the unknown optimal modal support without prescribing a basis or truncation order. A novel full-space optimality certificate is established to guide mode discovery and quantify the remaining optimality gap. We further establish global convergence of the resulting enrichment sequence.
\end{itemize}

Numerical results illustrate the finite-rank structure of the optimized transmit covariance and show that the proposed method approaches the full-space optimum with substantially fewer modes than truncated-SVD and spatial-Fourier representations.

\begin{table}[htp!]
	\centering
	\caption{Operator Notation}
	\label{tab:notation}
	\renewcommand{\arraystretch}{1.12}
	\begin{tabular}{p{0.29\columnwidth} p{0.63\columnwidth}}
		\hline
		Symbol & Meaning \\
		\hline
		
		$\Acal^*$
		& Adjoint of operator $\Acal$ \\
		
		$\Tr(\Acal)$
		& Trace of a trace-class operator $\Acal$ \\
		
		$\rank(\Acal)$
		& Dimension of the range of $\Acal$ \\
		
		$\mathcal S_1(\mathcal X)$
		& Space of trace-class operators on $\mathcal X$ \\
		
		$\mathcal S_1^+(\mathcal X)$
		& Cone of PSD trace-class operators on $\mathcal X$ \\
		
		$\mathcal K(\mathcal X)$
		& Space of compact operators on $\mathcal X$ \\
		
		$\Acal\succeq\Bcal$
		& $\Acal-\Bcal$ is positive semidefinite \\
		
		$\Ical$
		& Identity operator on the ambient Hilbert space \\
		
		$\ran(\Acal)$, $\ker(\Acal)$
		& Range and nullspace of $\Acal$ \\
		
		$\mathcal M^\perp$
		& Orthogonal complement of subspace $\mathcal M$ \\
		
		$\mathcal M_1\oplus\mathcal M_2$
		& Orthogonal direct sum \\
		\hline
	\end{tabular}
	\vspace{-5mm}
\end{table}
\vspace{-5pt}
\subsection{Notation and Mathematical Preliminaries}
\label{subsec:notation}

Scalars are denoted by italic lowercase letters $a$, vectors and vector-valued functions by bold lowercase letters $\bm a$, finite-dimensional matrices by bold uppercase letters $\bm A$, and Hilbert spaces and infinite-dimensional operators by calligraphic letters $\Hcal$ and $\Acal$, respectively. The sets of real and complex numbers are denoted by $\mathbb R$ and $\mathbb C$. The superscripts $(\cdot)^T$ and $(\cdot)^H$ denote the transpose and conjugate transpose, respectively. We use $\mathbb E[\cdot]$ for expectation, $\norm{\cdot}_2$ for the Euclidean norm, $\norm{\cdot}_{\rm op}$ for the operator norm, and $\norm{\cdot}_{\rm HS}$ for the Hilbert--Schmidt norm. The  operator-theoretic notation follows standard functional-analysis conventions~\cite{kreyszig1978introductory}.

For a spatial domain $\mathcal D\subseteq\mathbb R^p$, the vector-valued space
\begin{equation*}
	\mathcal X
	=
	L^2(\mathcal D;\mathbb C^d)
	=
	\left\{
	\bm a:\mathcal D\rightarrow\mathbb C^d:
	\int_{\mathcal D}\norm{\bm a(\bm z)}_2^2\,d\bm z<\infty
	\right\},
\end{equation*}
is a Hilbert space equipped with the inner product
\begin{equation*}
	\inner{\bm a}{\bm b}_{\mathcal X}
	=
	\int_{\mathcal D}
	\bm a^H(\bm z)\bm b(\bm z)\,d\bm z,
	\qquad
	\norm{\bm a}_{\mathcal X}^2
	=
	\inner{\bm a}{\bm a}_{\mathcal X},
\end{equation*}
where the inner product is conjugate linear in its first argument. For a bounded linear operator $\Acal:\mathcal X\rightarrow\mathcal Y$, its adjoint $\Acal^*:\mathcal Y\rightarrow\mathcal X$ satisfies
\begin{equation*}
	\inner{\Acal\bm x}{\bm y}_{\mathcal Y}
	=
	\inner{\bm x}{\Acal^*\bm y}_{\mathcal X}.
\end{equation*}
An operator $\Acal:\mathcal X\rightarrow\mathcal X$ is self-adjoint if $\Acal=\Acal^*$ and positive semidefinite (PSD), denoted $\Acal\succeq0$, if $\inner{\bm x}{\Acal\bm x}_{\mathcal X}\ge0$ for all $\bm x\in\mathcal X$. Accordingly, $\Acal\succeq\Bcal$ means $\Acal-\Bcal\succeq0$.

The rank of an operator is $\rank(\Acal)=\dim\ran(\Acal)$, which may be finite or infinite. For $\bm a,\bm b\in\mathcal X$, the rank-one operator $\bm a\otimes\bm b$ is defined by
\begin{equation}
	(\bm a\otimes\bm b)\bm u
	=
	\bm a\inner{\bm b}{\bm u}_{\mathcal X},
	\qquad
	\bm u\in\mathcal X.
	\label{eq:rank_one_operator}
\end{equation}

An operator $\Acal$ is trace class if its singular values $\{s_n(\Acal)\}$ are summable, with
\(
\norm{\Acal}_1\triangleq\sum_{n=1}^{\infty}s_n(\Acal)<\infty
\).
The space of trace-class operators on $\mathcal X$ is denoted by $\mathcal S_1(\mathcal X)$, with $\mathcal S_1^+(\mathcal X)$ denoting its PSD cone.
For $\Acal\in\mathcal S_1(\mathcal X)$,
\begin{equation}\label{eq:tr_def_1}
\Tr(\Acal)
=
\sum_{n=1}^{\infty}
\inner{\bm e_n}{\Acal\bm e_n}_{\mathcal X},
\end{equation}
where $\{\bm e_n\}$ is any orthonormal basis of $\mathcal X$.
If $\Acal\succeq0$,
\(
\Tr(\Acal)=\norm{\Acal}_1
=\sum_{n=1}^{\infty}\lambda_n(\Acal)
\).

\subsection{Organization}
\label{subsec:organization}

The remainder of this paper is organized as follows. Section~\ref{sec:system_model} introduces the CAPA communication and extended-target electromagnetic models and formulates the continuous-domain beamforming problem. Section~\ref{sec:method} develops the operator SDR and establishes its structural properties and tightness. Section~\ref{sec:mode_enrichment} presents the adaptive electromagnetic-mode enrichment method and its convergence analysis. The numerical results are presented in Section~\ref{sec:simulation}, followed by the conclusion in Section~\ref{sec:conclusion}.

\section{System Model and Formulation}
\label{sec:system_model}


Consider the downlink CAPA-enabled ISAC system in Fig.~\ref{fig:system_model}, where a base station (BS) equipped with a planar continuous transmit aperture serves \(K\) communication users while illuminating \(Q\) spatially extended sensing targets. The transmit current carries \(K\) independent information streams and a dedicated sensing signal. We next present the communication and extended-target sensing models and formulate the corresponding beamforming problem.

\begin{figure}[t]
	\centering
	\includegraphics[width=0.7\linewidth,trim={0.0cm 0.3cm 0.0cm 0.2cm}, clip]{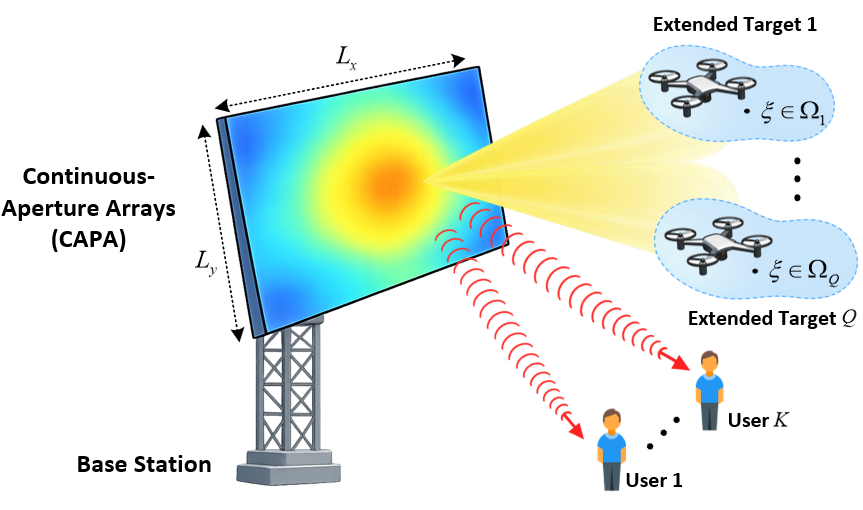}
	\caption{CAPA-enabled downlink ISAC system.}
	\label{fig:system_model}
	\vspace{-2mm}
\end{figure}
\subsection{CAPA Transmission and Communication Model}
\label{subsec:tx_model}

The CAPA transmitter is modeled as a continuous radiating aperture \(\Scal_T\) lying in the \(x\)--\(y\) plane, with side lengths \(L_x\) and \(L_y\). The position of an arbitrary point on the aperture is denoted by \(\bm s=[s_x,s_y,0]^T\in\mathbb R^3\), and the aperture domain is
\begin{equation}
	\Scal_T
	=
	\left\{
	[s_x,s_y,0]^T:
	|s_x|\le\frac{L_x}{2},\;
	|s_y|\le\frac{L_y}{2}
	\right\}
	\subset\mathbb R^3.
	\label{eq:aperture_domain}
\end{equation}
The equivalent source current at \(\bm s\in\Scal_T\) is represented by the three-component vector
\begin{equation}
	\bm j(\bm s)
	=
	[j_x(\bm s),j_y(\bm s),j_z(\bm s)]^T
	\in\mathbb C^3,
	\label{eq:vector_current}
\end{equation}
where \(j_x(\bm s)\), \(j_y(\bm s)\), and \(j_z(\bm s)\) denote the current components along the three Cartesian directions. Accordingly, the admissible aperture currents belong to the Hilbert space
\begin{equation}
	\Hcal
	=
	L^2(\Scal_T;\mathbb C^3).
	\label{eq:current_space}
\end{equation}

The transmit current consists of \(K\) independent information streams and a dedicated sensing signal, and is expressed as 
\begin{equation}
	\bm j(\bm s)
	=
	\sum_{k=1}^{K}
	\bm w_k(\bm s)x_k
	+
	\bm v(\bm s),
	\qquad
	\forall~\bm s\in\Scal_T,
	\label{eq:tx_current}
\end{equation}
where \(\bm w_k\in\Hcal\) denotes the information beamformer for user \(k\), and \(x_k\) is the corresponding information symbol. 
The information symbols are assumed to satisfy \(\mathbb E[x_k]=0\) and \(\mathbb E[x_kx_i^*]=\delta_{ki}\). The dedicated sensing signal \(\bm v\) is modeled as a zero-mean, second-order \(\Hcal\)-valued random element independent of \(\{x_k\}_{k=1}^{K}\), with covariance operator
\begin{equation}
	\Qcal_s
	=
	\mathbb E[\bm v\otimes\bm v]
	\in
	\mathcal S_1^+(\Hcal),
	\label{eq:sensing_covariance}
\end{equation}
where \(\bm a\otimes\bm b\) denotes the rank-one operator defined in~\eqref{eq:rank_one_operator}. The covariance operator of the total transmit current is then
\begin{equation}
	\Qcal
	=
	\mathbb E[\bm j\otimes\bm j]
	=
	\sum_{k=1}^{K}
	\bm w_k\otimes\bm w_k
	+
	\Qcal_s
	\in
	\mathcal S_1^+(\Hcal).
	\label{eq:total_covariance}
\end{equation}
Consequently, the average transmit power is given by
\begin{equation}
	\Tr(\Qcal)
	=
	\sum_{k=1}^{K}
	\norm{\bm w_k}_{\Hcal}^2
	+
	\Tr(\Qcal_s),
	\label{eq:total_power}
\end{equation}
where \(\Tr(\cdot)\) denotes the operator trace defined
in~\eqref{eq:tr_def_1}.

Electromagnetic propagation from the transmit aperture to an observation point is characterized by the free-space dyadic Green tensor. For an observation point \(\bm r\in\mathbb R^3\) and a source point \(\bm s\in\Scal_T\), define
\begin{equation}
	R=\norm{\bm r-\bm s}_2,
	\qquad
	\widehat{\bm R}
	=
	\frac{\bm r-\bm s}{R},
	\qquad
	k_0=\frac{2\pi}{\lambda},
	\label{eq:green_geometry}
\end{equation}
where \(\lambda\) denotes the carrier wavelength. The corresponding dyadic Green tensor is
\begin{align}
	\bm{\mathsf G}(\bm r,\bm s)
	=&\;
	\alpha_0\frac{e^{-jk_0R}}{4\pi R}
	\Bigg[
	\left(
	1-\frac{j}{k_0R}-\frac{1}{(k_0R)^2}
	\right)\bm I_3
	\nonumber\\
	&\quad+
	\left(
	-1+\frac{3j}{k_0R}
	+\frac{3}{(k_0R)^2}
	\right)
	\widehat{\bm R}\widehat{\bm R}^{T}
	\Bigg],
	\label{eq:dyadic_green}
\end{align}
where \(\alpha_0\) collects the deterministic electromagnetic scaling constants. Hence, the electric field induced by an aperture current
\(\bm j\in\Hcal\) at \(\bm r\) is
\begin{equation}
	\bm e(\bm r)
	=
	\int_{\Scal_T}
	\bm{\mathsf G}(\bm r,\bm s)
	\bm j(\bm s)\,d\bm s.
	\label{eq:radiated_field}
\end{equation}

For the communication link, user $k$ is located at $\bm r_k\in\mathbb R^3$ and employs a fixed unit-norm polarization combiner $\bm\psi_k\in\mathbb C^3$. After polarization combining, the received signal is
\begin{equation} \label{eq:rx_sig}
	\begin{aligned}
		y_k & = \bm\psi_k^H\bm e(\bm r_k) + n_k= 	\inner{\bm h_k}{\bm j}_{\Hcal} + n_k,\\
		& =
		\inner{\bm h_k}{\bm w_k}_{\Hcal}x_k
		+
		\sum_{i\neq k}
		\inner{\bm h_k}{\bm w_i}_{\Hcal}x_i
		+
		\inner{\bm h_k}{\bm v}_{\Hcal}
		+
		n_k,
	\end{aligned}
\end{equation}
where $\bm h_k(\bm s)
\triangleq 
\bm{\mathsf G}^{H}(\bm r_k,\bm s)
\bm\psi_k \in\Hcal$ denotes the aperture-domain electromagnetic response of user $k$, and $n_k\sim\mathcal{CN}(0,\sigma_k^2)$.

By treating the multiuser interference and the dedicated sensing signal as interference, the receive SINR of user $k$ is
\begin{equation}
	\mathrm{SINR}_k
	=
	\frac{
		\left|
		\inner{\bm h_k}{\bm w_k}_{\Hcal}
		\right|^2
	}{
		\displaystyle
		\sum_{i\neq k}
		\left|
		\inner{\bm h_k}{\bm w_i}_{\Hcal}
		\right|^2
		+
		\inner{\bm h_k}{\Qcal_s\bm h_k}_{\Hcal}
		+
		\sigma_k^2
	}.
	\label{eq:SINR}
\end{equation}

\subsection{Extended-Target Electromagnetic Sensing Model}
\label{subsec:extended_target}

Consider the $q$-th extended sensing target occupying a bounded region $\Omega_q\subset\mathbb R^3$. For a given aperture current $\bm j\in\Hcal$, the average illumination delivered to the $q$-th extended target over $\Omega_q$ is measured by the total electromagnetic-field energy  
\begin{align}
	I_q = 		\mathbb E\!\left[
	P_q(\bm j)
	\right]
	= \mathbb E\!\left[
	\int_{\Omega_q}
	\left\|
	\int_{\Scal_T}
	\bm{\mathsf G}(\bm\xi,\bm s)
	\bm j(\bm s)\,d\bm s
	\right\|_2^2
	d\bm\xi \right].
	\label{eq:instantaneous_illumination}
\end{align}
To express~\eqref{eq:instantaneous_illumination} compactly, we define the target-field Hilbert space 
\begin{equation}
	\Ycal_q
	=
	L^2(\Omega_q;\mathbb C^3)
	\label{eq:target_space}
\end{equation}
and the aperture-to-target operator $\Acal_q:\Hcal\rightarrow\Ycal_q$ as
\begin{equation}
	(\Acal_q\bm j)(\bm\xi)
	=
	\int_{\Scal_T}
	\bm{\mathsf G}(\bm\xi,\bm s)
	\bm j(\bm s)\,d\bm s,
	\qquad
	\bm\xi\in\Omega_q.
	\label{eq:Aq}
\end{equation}
Then, the illumination metric \eqref{eq:instantaneous_illumination} can be written as
\begin{equation}
	\begin{aligned}
		I_q
		&=
		\mathbb E\!\left[
		\norm{\Acal_q\bm j}_{\Ycal_q}^2
		\right]
		=
		\mathbb E\!\left[
		\inner{\bm j}{\Rcal_q\bm j}_{\Hcal}
		\right]
		=
		\Tr(\Rcal_q\Qcal),
	\end{aligned}
	\label{eq:illumination_equivalent_forms}
\end{equation}
where $
\Rcal_q
\triangleq
\Acal_q^*\Acal_q
:
\Hcal\rightarrow\Hcal,
$
denotes  the target illumination operator. The last equality in~\eqref{eq:illumination_equivalent_forms} follows from
\begin{equation}\label{eq:rankone_trace_identity}
	\Tr\!\left(
	\Rcal_q(\bm j\otimes\bm j)
	\right)
	=
	\inner{\bm j}{\Rcal_q\bm j}_{\Hcal},
\end{equation}
together with \(\Qcal=\mathbb E[\bm j\otimes\bm j]\) defined in~\eqref{eq:total_covariance}.

Let \(\omega_q\ge0\) denote the sensing weight assigned to target \(q\). The weighted illumination utility
\(
U_{\rm ill}\triangleq\sum_{q=1}^{Q}\omega_q I_q
\)
can be written as
\begin{equation}
	U_{\rm ill}
	=
	\Tr(\Rcal\Qcal)
	=
	\sum_{k=1}^{K}
	\inner{\bm w_k}{\Rcal\bm w_k}_{\Hcal}
	+
	\Tr(\Rcal\Qcal_s),
	\label{eq:total_illumination}
\end{equation}
where
\(
\Rcal=\sum_{q=1}^{Q}\omega_q\Rcal_q
\)
is the aggregate sensing operator.

\vspace{-5pt}
\subsection{Infinite-Dimensional Beamforming Formulation}
\label{subsec:problem_formulation}

We jointly design the information beamformers $\{\bm w_k\}_{k=1}^{K}$ and the sensing covariance $\Qcal_s$ to maximize the weighted target illumination while satisfying the communication QoS and transmit-power requirements. The resulting problem is
\begin{subequations}
	\label{prob:P0}
	\begin{align}
		\mathcal P_0:\quad
		\max_{\{\bm w_k\},\,\Qcal_s}\quad
		&
		\Tr(\Rcal\Qcal)
		\label{prob:P0_obj}
		\\
		\mathrm{s.t.}\quad
		&
		\mathrm{SINR}_k\ge\gamma_k,
		\qquad
		k=1,\ldots,K,
		\label{prob:P0_sinr}
		\\
		&
		\sum_{k=1}^{K}
		\norm{\bm w_k}_{\Hcal}^2
		+
		\Tr(\Qcal_s)
		\le
		P_{\max},
		\label{prob:P0_power}
		\\
		&
		\bm w_k\in\Hcal,
		\quad k=1,\ldots,K,\\
		&
		\Qcal_s\in\mathcal S_1^+(\Hcal).
		\label{prob:P0_domain}
	\end{align}
\end{subequations}
Problem~\eqref{prob:P0} is an infinite-dimensional nonconvex optimization problem due to the functional beamforming variables and SINR constraints. Moreover, the extended-target sensing operator \(\Rcal\) may have infinite rank, making it difficult to identify a priori the optimal solution subspace. In the next section, we will develop a tractable framework that addresses both issues and enables global optimization of $\mathcal P_0$.

\vspace{-5pt}
\section{An Operator SDR Approach to ${\cal P}_0$}
\label{sec:method}
In this section, we derive an operator SDR of $\mathcal P_0$ and establish its optimal covariance structure and tightness.

For each communication user, define the rank-one information covariance operator
\begin{equation}
	\Qcal_k
	=
	\bm w_k\otimes\bm w_k
	\in
	\mathcal S_1^+(\Hcal),
	\qquad
	k=1,\ldots,K.
	\label{eq:Qk}
\end{equation}
For a nonzero positive operator $\Qcal_k$, the factorization in \eqref{eq:Qk} is equivalent to
\begin{equation}
	\Qcal_k\succeq0,
	\qquad
	\rank(\Qcal_k)=1.
	\label{eq:rankone_equiv}
\end{equation}
Rewriting the quadratic terms in $\mathcal P_0$ using $\{\Qcal_k\}$ and dropping the rank-one constraints yields the operator SDR:
\begin{subequations}
	\label{prob:sdr}
	\begin{align}
		\mathcal P_{\rm SDR}:
		\max_{\{\Qcal_k\},\,\Qcal_s}
		\quad&
		\Tr(\Rcal\Qcal)
		\label{prob:sdr_obj}
		\\
		\mathrm{s.t.}\quad
		&
		(1+\gamma_k)
		\Tr(\Hcal_k\Qcal_k)
		-
		\gamma_k
		\Tr(\Hcal_k\Qcal)
		\nonumber\\
		&\hspace{12mm}
		\ge
		\gamma_k\sigma_k^2,
		\quad
		k=1,\ldots,K,
		\label{prob:sdr_sinr}
		\\
		&
		\Tr(\Qcal)\le P_{\max},
		\label{prob:sdr_power}
		\\
		&
		\Qcal_k,\Qcal_s
		\in
		\mathcal S_1^+(\Hcal),
		\label{prob:sdr_domain}
	\end{align}
\end{subequations}
where $\Hcal_i
\triangleq
\bm h_i\otimes\bm h_i$ and \eqref{prob:sdr_sinr} follows from $\left|
\inner{\bm h_i}{\bm w_k}_{\Hcal}
\right|^2
=
\Tr(\Hcal_i\Qcal_k)$.
Problem~\eqref{prob:sdr} is convex but remains infinite dimensional. Unlike conventional finite-dimensional matrix optimization problems, the existence of an optimizer in infinite-dimensional spaces is not guaranteed without additional compactness properties. Therefore, before characterizing its optimal structure, we first establish the existence of an optimal solution.
\begin{lemma}
	\label{lem:sdr_existence}
	Suppose that \eqref{prob:sdr} is feasible. Then, an optimal solution of
	\eqref{prob:sdr} exists.
\end{lemma}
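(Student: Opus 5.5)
We use the direct method in the weak-$*$ topology of trace-class operators. Recall that $\mathcal S_1(\Hcal)$ is the dual of the space of compact operators $\mathcal K(\Hcal)$ under the pairing $(\Acal,\mathcal C)\mapsto\Tr(\Acal\mathcal C)$. Since $\Hcal$ is separable, the Banach--Alaoglu theorem shows that norm-bounded subsets of $\mathcal S_1(\Hcal)$ are weak-$*$ sequentially compact.

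First, the feasible set is bounded. For every feasible tuple, each of $\Qcal_1,\dots,\Qcal_K,\Qcal_s$ is PSD, so $\norm{\Qcal_k}_1=\Tr(\Qcal_k)\le\Tr(\Qcal)\le P_{\max}$, and similarly $\norm{\Qcal_s}_1\le P_{\max}$.

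Second, the objective is bounded above and all data are compact. Each $\Acal_q$ is an integral operator. When the closure of $\Omega_q$ is at positive distance from $\Scal_T$ (the standard far/near-field target geometry), its kernel $\bm{\mathsf G}(\bm\xi,\bm s)$ is bounded and continuous on $\Omega_q\times\Scal_T$. Hence $\Acal_q$ is Hilbert--Schmidt, so $\Rcal=\sum_q\omega_q\Acal_q^*\Acal_q$ is compact (indeed trace class), and $\Tr(\Rcal\Qcal)\le\norm{\Rcal}_{\rm op}P_{\max}$. In particular the supremum $p^\star$ is finite. The operators $\Hcal_k=\bm h_k\otimes\bm h_k$ are rank one and therefore compact. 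The constant function $\Tr(\Ical\,\cdot)$ in the power constraint is not of this form, since $\Ical$ is not compact, so it needs separate treatment below.

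Third, take a maximizing sequence $\{(\Qcal_k^{(n)})_k,\Qcal_s^{(n)}\}$ with $\Tr(\Rcal\Qcal^{(n)})\to p^\star$. Extract a subsequence along which every component converges weak-$*$, say $\Qcal_k^{(n)}\to\bar\Qcal_k$ and $\Qcal_s^{(n)}\to\bar\Qcal_s$. We then verify feasibility of the limit.

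1. Positivity is preserved, because $\inner{\bm x}{\Acal\bm x}=\Tr(\Acal(\bm x\otimes\bm x))$ and $\bm x\otimes\bm x$ is compact.
2. The SINR constraints are affine in the variables with compact coefficients $\Hcal_k$, so they pass to the limit with equality of the limiting values.
3. The objective converges, $\Tr(\Rcal\Qcal^{(n)})\to\Tr(\Rcal\bar\Qcal)=p^\star$, because $\Rcal$ is compact.
4. For the power constraint, we use weak-$*$ lower semicontinuity of the trace on the PSD cone. For any finite-rank orthogonal projection $\mathcal P_N$, $\Tr(\mathcal P_N\bar\Qcal)=\lim_n\Tr(\mathcal P_N\Qcal^{(n)})\le P_{\max}$, since $\mathcal P_N$ is compact and $\Qcal^{(n)}\succeq0$. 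Let $\mathcal P_N\uparrow\Ical$ strongly. Monotone convergence of $\sum_{n\le N}\inner{\bm e_n}{\bar\Qcal\bm e_n}$ then gives $\Tr(\bar\Qcal)\le P_{\max}$.

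Hence the limit is feasible and attains $p^\star$, so it is an optimal solution.

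The main obstacle is the trace-norm topology. The power functional is not weak-$*$ continuous, which is why it is handled by lower semicontinuity in step 4. The argument also needs compactness of $\Rcal$ to guarantee that the objective is continuous along the sequence. If the target regions could touch the aperture, the kernel would be singular. In that case one would only have boundedness of $\Rcal$, and one would instead argue via the trace-class property of $\Rcal_q$ on separated domains, or assume this separation explicitly.
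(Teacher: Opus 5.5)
Your proposal is correct and follows essentially the same route as the paper. It uses weak-$*$ compactness of the trace-norm ball in $\mathcal S_1(\Hcal)=\mathcal K(\Hcal)^*$ via Banach--Alaoglu, preserves positivity and the SINR constraints through the compact test operators $\bm x\otimes\bm x$ and $\Hcal_k$, handles the power constraint by weak-$*$ lower semicontinuity, and gets weak-$*$ continuity of the objective from compactness of $\Rcal$; the differences are cosmetic (a maximizing sequence instead of ``continuous function on a weak-$*$ compact set,'' and finite-rank projections instead of lower semicontinuity of the dual norm), while your explicit appeals to separability for sequential extraction and to aperture--target separation for the Hilbert--Schmidt property fill in points the paper leaves implicit.
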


{\it Proof.}~See Appendix~\ref{app:sdr_existence} \hfill $\blacksquare$

With Lemma~\ref{lem:sdr_existence}, we next characterize the finite-rank structure of the optimal covariance operators. We distinguish between the cases $\rank(\Rcal)<\infty$ and $\rank(\Rcal)=\infty$.

\vspace{-10pt}
\subsection{$\rank(\Rcal)<\infty$}
\label{subsubsec:finite_rank_R}

A representative finite-rank sensing scenario arises when each target is modeled by a finite number of point scatterers. In this case, the electromagnetic response is characterized by a finite-dimensional set of field components, and the resulting target illumination operators $\Rcal_q$, and hence their aggregate $\Rcal$, have finite rank.

Define the total response subspace
\begin{equation}
	\Ucal_{\rm fr}
	\triangleq
	\spann\!\left(
	\ran(\Rcal)
	\cup
	\{\bm h_1,\ldots,\bm h_K\}
	\right).
	\label{eq:finite_response_space}
\end{equation}
Since $\rank(\Rcal)<\infty$ and $K<\infty$,  
$\Ucal_{\rm fr}$ is a finite-dimensional closed subspace of $\Hcal$. Both the sensing objective and the communication constraints depend only on the components of the covariance operators within $\Ucal_{\rm fr}$. Accordingly, the transmit covariance operators can be restricted to this subspace without loss of optimality.

\begin{theorem}
	\label{thm:finite_rank_response}
	If $\rank(\Rcal)<\infty$, then $\mathcal P_{\rm SDR}$ in
	\eqref{prob:sdr} admits an optimal solution
	$(\{\Qcal_k^\star\}_{k=1}^{K},\Qcal_s^\star)$ such that
	\begin{equation}
		\ran(\Qcal_j^\star)
		\subseteq
		\Ucal_{\rm fr},
		\qquad
		j\in\{s,1,\ldots,K\}.
		\label{eq:finite_response_range}
	\end{equation}
	Consequently,
	\begin{equation}
		\rank(\Qcal_j^\star)
		\le
		\dim(\Ucal_{\rm fr})
		<
		\infty,
		\qquad
		j\in\{s,1,\ldots,K\}.
		\label{eq:finite_response_rank_bound}
	\end{equation}
\end{theorem}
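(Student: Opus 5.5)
We take an optimal solution supplied by Lemma~\ref{lem:sdr_existence} and compress it onto $\Ucal_{\rm fr}$ with the orthogonal projector. We then show that this compression keeps feasibility and does not change the objective. Let $(\{\bar\Qcal_k\},\bar\Qcal_s)$ be optimal for \eqref{prob:sdr}; an optimal solution exists because $\mathcal P_{\rm SDR}$ is feasible whenever the statement is nonvacuous. Let $\Pcal$ denote the orthogonal projector onto $\Ucal_{\rm fr}$. This subspace is finite dimensional, hence closed, so $\Pcal$ is a bounded self-adjoint idempotent. Define
\begin{equation*}
	\Qcal_j^\star\triangleq\Pcal\bar\Qcal_j\Pcal,\qquad j\in\{s,1,\ldots,K\}.
\end{equation*}
Each $\Qcal_j^\star$ is PSD, since $\inner{\bm x}{\Pcal\bar\Qcal_j\Pcal\bm x}=\inner{\Pcal\bm x}{\bar\Qcal_j\Pcal\bm x}\ge0$. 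Each is also trace class, since $\mathcal S_1$ is an ideal in the bounded operators. Moreover $\ran(\Qcal_j^\star)\subseteq\ran(\Pcal)=\Ucal_{\rm fr}$, which gives \eqref{eq:finite_response_range}. The rank bound \eqref{eq:finite_response_rank_bound} follows immediately.

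The core step is to show that every quantity in \eqref{prob:sdr} is invariant under this compression, except the power, which can only decrease. We use cyclicity of the trace, $\Tr(\Bcal\Acal)=\Tr(\Acal\Bcal)$, which holds for $\Acal$ trace class and $\Bcal$ bounded. This gives $\Tr(\Rcal\Pcal\bar\Qcal_j\Pcal)=\Tr(\Pcal\Rcal\Pcal\bar\Qcal_j)$. Since $\Rcal$ is self-adjoint, $\ran(\Rcal)\subseteq\Ucal_{\rm fr}$ implies $\ker(\Rcal)\supseteq\Ucal_{\rm fr}^\perp$. Hence $\Rcal\Pcal=\Rcal$ and $\Pcal\Rcal=\Rcal$, so $\Pcal\Rcal\Pcal=\Rcal$ and the objective is unchanged. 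For the users, $\Pcal\bm h_i=\bm h_i$ gives $\Pcal\Hcal_i\Pcal=(\Pcal\bm h_i)\otimes(\Pcal\bm h_i)=\Hcal_i$. Thus each term $\Tr(\Hcal_i\Qcal_k)$ is unchanged, and the SINR constraints \eqref{prob:sdr_sinr} remain satisfied. For the power, $\Tr(\Pcal\bar\Qcal_j\Pcal)=\Tr(\bar\Qcal_j\Pcal)=\Tr(\bar\Qcal_j^{1/2}\Pcal\bar\Qcal_j^{1/2})\le\Tr(\bar\Qcal_j)$, because $\Pcal\preceq\Ical$. Summing over $j$ shows that \eqref{prob:sdr_power} still holds. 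The projected tuple is therefore feasible and attains the optimal value, so it is optimal.

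The main obstacle is rigor in the infinite-dimensional setting, specifically the use of trace cyclicity. I would handle it concretely by taking an orthonormal basis of $\Hcal$ whose first $\dim\Ucal_{\rm fr}$ elements span $\Ucal_{\rm fr}$. Evaluating \eqref{eq:tr_def_1} in this basis makes each identity an explicit finite sum plus vanishing terms. The same basis shows that $\Tr(\Pcal\bar\Qcal_j\Pcal)$ equals the partial sum of the nonnegative diagonal entries $\inner{\bm e_n}{\bar\Qcal_j\bm e_n}$. This confirms the power inequality without invoking square roots.
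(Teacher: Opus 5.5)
Your proposal is correct and follows the paper's proof essentially step for step. Like the paper, you compress an optimal solution from Lemma~\ref{lem:sdr_existence} by the projector $\Pcal$ onto $\Ucal_{\rm fr}$, use $\Pcal\Rcal\Pcal=\Rcal$ and $\Pcal\bm h_k=\bm h_k$ to keep the objective and SINR terms unchanged, and bound the power via $\Tr(\bar\Qcal_j^{1/2}\Pcal\bar\Qcal_j^{1/2})\le\Tr(\bar\Qcal_j)$; your evaluation of the traces in an orthonormal basis adapted to $\Ucal_{\rm fr}$ makes explicit a cyclicity step that the paper leaves implicit.
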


{\it Proof.}~See Appendix~\ref{app:finite_rank_response}. \hfill $\blacksquare$

Theorem~\ref{thm:finite_rank_response} establishes a finite-rank optimal solution of $\mathcal P_{\rm SDR}$ supported on the finite-dimensional subspace $\Ucal_{\rm fr}$. This finite-dimensional optimal structure enables an equivalent matrix SDP reformulation of the operator problem. We defer the  matrix reformulation to Section~\ref{sec:subspace_search} for a detailed discussion.

\subsection{$\rank(\Rcal)=\infty$}
\label{subsubsec:infinite_rank_R}

A representative infinite-rank sensing scenario arises when an extended target is modeled as a continuously distributed spatial region rather than a finite collection of point scatterers. In this case, the electromagnetic response over the target region may involve infinitely many spatial modes, so the illumination operators $\Rcal_q$ and their aggregate $\Rcal$ may have infinite rank.
The response subspace $\Ucal_{\rm fr}$ is therefore no longer finite dimensional. Nevertheless, the optimal covariance operators may still exhibit an intrinsic finite-rank structure, as shown below.
\begin{theorem}
	\label{thm:finite_rank_sdr}
	Assume constraint qualification holds for $\mathcal P_{\rm SDR}$. If $\rank(\Rcal)=\infty$, then every optimal solution
	$(\{\Qcal_k^\star\}_{k=1}^{K},\Qcal_s^\star)$ satisfies
	\begin{equation}
		\rank(\Qcal_j^\star)
		<\infty,
		\qquad
		j\in\{s,1,\ldots,K\}.
		\label{eq:infinite_sensing_finite_rank}
	\end{equation}
\end{theorem}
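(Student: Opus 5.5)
We will argue through the KKT conditions of $\mathcal P_{\rm SDR}$. Since constraint qualification is assumed, strong duality holds and, for every optimal solution, there exist multipliers $\mu\ge0$ for the power constraint~\eqref{prob:sdr_power} and $\lambda_k\ge0$ for the SINR constraints~\eqref{prob:sdr_sinr}. With $\Qcal=\sum_k\Qcal_k+\Qcal_s$, the Lagrangian is linear in each $\Qcal_j$. Stationarity over the PSD trace-class cone then produces bounded self-adjoint dual slack operators
\begin{align*}
\mathcal Z_s&=\mu\Ical-\Rcal+\sum_{i=1}^K\lambda_i\gamma_i\Hcal_i\succeq0,\\
\mathcal Z_k&=\mathcal Z_s-\lambda_k(1+\gamma_k)\Hcal_k\succeq0,\qquad k=1,\ldots,K.
\end{align*}
They satisfy complementary slackness $\Tr(\mathcal Z_j\Qcal_j^\star)=0$ for $j\in\{s,1,\ldots,K\}$. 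Because $\mathcal Z_j\succeq0$ and $\Qcal_j^\star\succeq0$, this trace condition upgrades to $\mathcal Z_j\Qcal_j^\star=0$: write $\Qcal_j^\star=\sum_n\lambda_n\bm u_n\otimes\bm u_n$, so each term $\lambda_n\inner{\bm u_n}{\mathcal Z_j\bm u_n}$ vanishes, and hence $\mathcal Z_j^{1/2}\bm u_n=0$. It follows that $\ran(\Qcal_j^\star)\subseteq\ker(\mathcal Z_j)$.

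The main task is to show that $\ker(\mathcal Z_j)$ is finite dimensional. First we establish $\mu>0$. If $\mu=0$, then $\mathcal Z_s=-\Rcal+\sum_i\lambda_i\gamma_i\Hcal_i\succeq0$. This forces $\Rcal\preceq\sum_i\lambda_i\gamma_i\Hcal_i$, which gives $\ker(\Rcal)\supseteq\ker\!\big(\sum_i\lambda_i\gamma_i\Hcal_i\big)$, and therefore $\rank(\Rcal)\le K$. That contradicts $\rank(\Rcal)=\infty$. Here the infinite-rank hypothesis is actually used: in the finite-rank case $\mu=0$ is possible, and the kernel could then be huge, namely all of $\Ucal_{\rm fr}^\perp$.

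With $\mu>0$, each slack can be written as $\mathcal Z_j=\mu\Ical-\mathcal C_j$, where $\mathcal C_j$ equals $\Rcal$ plus a finite-rank operator. Since $\Acal_q$ is an integral operator with a square-integrable kernel on the bounded target regions $\Omega_q$, it is Hilbert--Schmidt. Hence $\Rcal$ is compact (indeed trace class), and so every $\mathcal C_j$ is compact. Then $\ker(\mathcal Z_j)=\ker(\mathcal C_j-\mu\Ical)$ is the eigenspace of a compact operator at the nonzero eigenvalue $\mu$, which is finite dimensional by the Riesz--Schauder theory. This gives $\rank(\Qcal_j^\star)\le\dim\ker(\mathcal Z_j)<\infty$ for every optimal solution and every $j$.

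I expect the main obstacle to be the rigorous infinite-dimensional dual setup. The step that needs care is to derive, from the assumed constraint qualification, multipliers attaining zero duality gap that hold simultaneously for every primal optimum (with dual attainment). One must also justify that the dual slack lives in the bounded-operator space, which is the dual of $\mathcal S_1$. I would handle this by pairing $\mathcal S_1(\Hcal)$ with $\mathcal B(\Hcal)$, using a Slater point to get dual attainment, and then applying the usual saddle-point argument: any primal optimum paired with any dual optimum satisfies the KKT conditions. A secondary point is the compactness of $\Acal_q$. This needs the Green kernel to be square-integrable over $\Omega_q\times\Scal_T$, which holds if the target regions are separated from the aperture so that $R$ is bounded away from zero.
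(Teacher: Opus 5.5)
Your proposal is correct and follows essentially the same route as the paper's proof. Both arguments use the same chain of steps:
\begin{itemize}
\item KKT multipliers from the constraint qualification;
\item complementary slackness, which forces $\ran(\Qcal_j^\star)\subseteq\ker(\mathcal Z_j)$;
\item the infinite rank of $\Rcal$, which forces $\mu>0$;
\item finite multiplicity of the nonzero eigenvalue $\mu$ of the compact operator $\Bcal_j$.
\end{itemize}
The one variation is in the $\mu>0$ step. You argue by contraposition via $\ker(\Rcal)\supseteq\ker(\sum_i\lambda_i\gamma_i\Hcal_i)$, whereas the paper directly exhibits a unit $\bm u\in\Ucal_c^\perp$ with $\inner{\bm u}{\Rcal\bm u}_{\Hcal}>0$; these are the same argument. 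Your remarks on dual attainment holding for every primal optimum, and on square-integrability of the Green kernel (targets separated from the aperture), make explicit two points the paper takes for granted.
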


{\it Proof.}~See Appendix~\ref{app:finite_rank_theorem}. \hfill $\blacksquare$

Theorems~\ref{thm:finite_rank_response} and~\ref{thm:finite_rank_sdr} highlight a fundamental distinction between the finite- and infinite-rank sensing regimes. When $\rank(\Rcal)<\infty$, an exact finite-dimensional optimal support is available a priori through a finite basis of the response subspace $\Ucal_{\rm fr}$. In contrast, when $\rank(\Rcal)=\infty$, although the optimal covariance operators remain finite rank, the electromagnetic modes supporting them are not known in advance, since the corresponding response space is infinite dimensional. Consequently, unlike the finite-rank case, $\mathcal P_{\rm SDR}$ cannot be directly reduced to a finite-dimensional matrix SDP using a prescribed finite basis.

Thus far, we have shown the finite-rank structure of optimal solutions to $\mathcal P_{\rm SDR}$. It remains to establish whether an optimal solution of $\mathcal P_0$ can be recovered from the relaxed problem. The following result proves the tightness of the operator SDR and provides a constructive rank-one recovery.
\begin{theorem}
	\label{thm:sdr_tight}
	Suppose that $\mathcal P_{\rm SDR}$ in \eqref{prob:sdr} is feasible, and let $(\{\Qcal_k^\star\}_{k=1}^{K},\Qcal_s^\star)$ be any optimal solution of $\mathcal P_{\rm SDR}$. 
	Let
	\begin{equation}
		\widehat{\bm w}_k
		=
		\frac{\Qcal_k^\star\bm h_k}{\sqrt{\Tr(\Hcal_k\Qcal_k^\star)}},
		\qquad
		\widehat{\Qcal}_k
		=
		\widehat{\bm w}_k\otimes\widehat{\bm w}_k,~~\forall~k
		\label{eq:rank_recovery}
	\end{equation}
	and 
	\begin{equation}
		\widehat{\Qcal}_s
		=
		\Qcal_s^\star
		+
		\sum_{k=1}^{K}
		\left(
		\Qcal_k^\star-\widehat{\Qcal}_k
		\right).
		\label{eq:Qs_recovery}
	\end{equation}
	Then $\widehat{\Qcal}_s\in\mathcal S_1^+(\Hcal)$ and $\rank(\widehat{\Qcal}_k)=1$ for all $k$. Moreover, $\{\widehat{\bm w}_k\}_{k=1}^{K}$ together with $\widehat{\Qcal}_s$ is globally optimal for $\mathcal P_0$. 
\end{theorem}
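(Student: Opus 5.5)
We will follow the classical rank-one reconstruction argument used for finite-dimensional SINR-constrained SDRs, and check that each step survives in the operator setting. Write $\Qcal^\star=\sum_k\Qcal_k^\star+\Qcal_s^\star$ for the optimal total covariance. The guiding observation is that the construction \eqref{eq:rank_recovery}--\eqref{eq:Qs_recovery} leaves the total covariance unchanged, $\widehat{\Qcal}\triangleq\sum_k\widehat{\Qcal}_k+\widehat{\Qcal}_s=\Qcal^\star$. Hence the objective $\Tr(\Rcal\widehat{\Qcal})$ and the power constraint are automatically preserved, and the proof reduces to three checks: well-definedness, positivity of $\widehat{\Qcal}_s$, and the SINR constraints.

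\textbf{Step 1 (well-definedness).} Since $\gamma_k>0$, $\sigma_k^2>0$ and $\Tr(\Hcal_k\Qcal^\star)\ge\Tr(\Hcal_k\Qcal_k^\star)$, the SINR constraint \eqref{prob:sdr_sinr} gives
\[
\Tr(\Hcal_k\Qcal_k^\star)\ge\gamma_k\sigma_k^2+\gamma_k\bigl(\Tr(\Hcal_k\Qcal^\star)-\Tr(\Hcal_k\Qcal_k^\star)\bigr)>0 .
\]
By \eqref{eq:rankone_trace_identity}, $\Tr(\Hcal_k\Qcal_k^\star)=\inner{\bm h_k}{\Qcal_k^\star\bm h_k}_{\Hcal}$, so the denominator in \eqref{eq:rank_recovery} is strictly positive. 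Each $\widehat{\Qcal}_k$ is then a nonzero rank-one operator, and since it has finite rank it is trace class.

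\textbf{Step 2 (positivity of $\widehat{\Qcal}_s$).} It suffices to show $\Qcal_k^\star-\widehat{\Qcal}_k\succeq0$ for each $k$. Fix $\bm u\in\Hcal$. We need
\[
|\inner{\bm u}{\Qcal_k^\star\bm h_k}|^2\le\inner{\bm u}{\Qcal_k^\star\bm u}\,\inner{\bm h_k}{\Qcal_k^\star\bm h_k}.
\]
This is the Cauchy--Schwarz inequality for the positive semidefinite sesquilinear form $(\bm a,\bm b)\mapsto\inner{\bm a}{\Qcal_k^\star\bm b}$; equivalently, one can apply ordinary Cauchy--Schwarz to $(\Qcal_k^\star)^{1/2}\bm u$ and $(\Qcal_k^\star)^{1/2}\bm h_k$, using the positive square root of a PSD bounded operator. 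Consequently $\widehat{\Qcal}_s$ is a sum of PSD trace-class operators and lies in $\mathcal S_1^+(\Hcal)$.

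\textbf{Step 3 (feasibility and optimality).} First, $\Tr(\Hcal_k\widehat{\Qcal}_k)=|\inner{\bm h_k}{\widehat{\bm w}_k}|^2=\inner{\bm h_k}{\Qcal_k^\star\bm h_k}=\Tr(\Hcal_k\Qcal_k^\star)$. Combined with $\widehat{\Qcal}=\Qcal^\star$, the left-hand side of every constraint \eqref{prob:sdr_sinr} is unchanged, and the trace and objective are unchanged as well. Thus $(\{\widehat{\Qcal}_k\},\widehat{\Qcal}_s)$ is feasible and optimal for $\mathcal P_{\rm SDR}$ with rank-one information covariances. Rewriting \eqref{prob:sdr_sinr} with $\widehat{\Qcal}_k=\widehat{\bm w}_k\otimes\widehat{\bm w}_k$ recovers exactly $\mathrm{SINR}_k\ge\gamma_k$ in \eqref{eq:SINR}, and the power constraints agree by \eqref{eq:total_power}. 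Hence $(\{\widehat{\bm w}_k\},\widehat{\Qcal}_s)$ is feasible for $\mathcal P_0$ and attains the SDR optimal value. Because $\mathcal P_{\rm SDR}$ is a relaxation, its value upper-bounds that of $\mathcal P_0$, so the recovered solution is globally optimal for $\mathcal P_0$. The existence of the SDR optimum used here is provided by Lemma~\ref{lem:sdr_existence}.

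\textbf{Main obstacle.} The only genuinely infinite-dimensional points are in Step 2: the operator Cauchy--Schwarz argument, and ensuring that $\widehat{\Qcal}_s$ is trace class rather than merely PSD and bounded. Both are settled by the square-root argument together with the finite rank of $\widehat{\Qcal}_k$. We also need the trace to be linear on $\mathcal S_1$ and the identity $\Tr(\Hcal_i\Qcal)=\inner{\bm h_i}{\Qcal\bm h_i}$ to hold for every $\Qcal\in\mathcal S_1$. We would verify the latter by evaluating the trace in an orthonormal basis whose first element is $\bm h_i/\norm{\bm h_i}$.
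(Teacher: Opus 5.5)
Your proposal is correct and follows the paper's proof step for step. It covers strict positivity of $\Tr(\Hcal_k\Qcal_k^\star)$ from the SINR constraint, the ordering $\widehat{\Qcal}_k\preceq\Qcal_k^\star$, preservation of the desired-signal term and of the total covariance $\Qcal^\star$, and the relaxation argument. The only difference is cosmetic: you prove $\widehat{\Qcal}_k\preceq\Qcal_k^\star$ by Cauchy--Schwarz for the semi-inner product $\inner{\cdot}{\Qcal_k^\star\,\cdot}$, whereas the paper writes $\Qcal_k^\star-\widehat{\Qcal}_k=(\Qcal_k^\star)^{1/2}(\Ical-\bm a_k\otimes\bm a_k)(\Qcal_k^\star)^{1/2}$ with a unit vector $\bm a_k$, which is the same inequality in factored form.
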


{\it Proof.}~See Appendix~\ref{app:sdr_tightness}. \hfill $\blacksquare$

\section{Adaptive Electromagnetic-Mode Enrichment Method for ${\cal P}_{\rm SDR}$}
\label{sec:mode_enrichment}

This section focuses on solving \(\mathcal P_{\rm SDR}\), particularly for the infinite-rank case. The proposed method adopts an adaptive subspace-search strategy. Specifically, it starts from a low-dimensional electromagnetic subspace and iteratively enriches this subspace according to the optimality information revealed by the current restricted solution. In this way, the method progressively identifies the electromagnetic modes supporting the global optimum.

\vspace{-10pt}

\subsection{Restricted Electromagnetic Search Subspace} \label{sec:subspace_search}

At iteration \(m\), the adaptive procedure maintains a finite-dimensional search subspace \(\Ucal_m\subset\Hcal\). Let \(r_m=\dim(\Ucal_m)\) and \(\{\bm u_1,\ldots,\bm u_{r_m}\}\) be an orthonormal basis of \(\Ucal_m\). Define the associated isometric basis operator
\begin{equation}
	\bm\Psi_m:\mathbb C^{r_m}\rightarrow\Hcal, 	\qquad
	\bm\Psi_m\bm a
	=
	\sum_{n=1}^{r_m}
	a_n\bm u_n,
	\label{eq:Psi_m}
\end{equation}
which satisfies $\ran(\bm\Psi_m)=\Ucal_m,\bm\Psi_m^*\bm\Psi_m=\bm I_{r_m}$.
Restrict the covariance operators to $\Ucal_m$ as
\begin{equation}
	\Qcal_k
	=
	\bm\Psi_m\bm Q_{k,m}\bm\Psi_m^*,
	\qquad
	\Qcal_s
	=
	\bm\Psi_m\bm Q_{s,m}\bm\Psi_m^*,
	\label{eq:restricted_covariances}
\end{equation}
where $\bm Q_{k,m},\ \bm Q_{s,m} \in \mathbb{C}^{r_m\times r_m}$ are positive semidefinite matrices. 
Substituting \eqref{eq:restricted_covariances} into \eqref{prob:sdr} yields the finite-dimensional SDP restricted to \(\Ucal_m\):
\begin{subequations}
	\label{prob:Pm}
	\begin{align}
		\mathcal P_m:\ 
		\max_{\{\bm Q_{k,m}\},\bm Q_{s,m}}
		&
		\Tr(\bm R_m\bm Q_m)
		\label{prob:Pm_obj}
		\\
		\mathrm{s.t.}\quad
		&
		(1+\gamma_k)
		\Tr(\bm H_{k,m}\bm Q_{k,m})
		\nonumber\\[-1mm]
		&\,\,
		-\gamma_k
		\Tr(\bm H_{k,m}\bm Q_m)
		\ge
		\gamma_k\sigma_k^2,
		\,\, \forall k,
		\label{prob:Pm_sinr}
		\\
		&
		\Tr(\bm Q_m)\le P_{\max},
		\label{prob:Pm_power}
		\\
		&	
		\bm Q_{k,m}\succeq0,~\forall k, \qquad \bm Q_{s,m}\succeq0,
		\label{prob:Pm_psd}
	\end{align}
\end{subequations}
where $	\bm R_m
\triangleq 
\bm\Psi_m^*\Rcal\bm\Psi_m$, $\bm H_{k,m}
\triangleq
\bm\Psi_m^*\Hcal_k\bm\Psi_m$ and $\bm Q_m \triangleq \sum_{k=1}^{K}\bm Q_{k,m} + \bm Q_{s,m}$.

Problem~\eqref{prob:Pm} is a standard finite-dimensional SDP and can be solved optimally. Its solution, however, is generally optimal only over the current subspace $\Ucal_m$, rather than for the full-space problem~\eqref{prob:sdr}. This raises two key questions: whether $\Ucal_m$ is sufficient to achieve full-space optimality, and, if not, how the search space should be enriched to capture an optimal solution of~\eqref{prob:sdr}. These issues are addressed in the next subsection.

\begin{remark}
	When $\rank(\Rcal)<\infty$, $\bm\Psi_m$ can be chosen as an isometric basis operator of $\Ucal_{\rm fr}$. By Theorem~\ref{thm:finite_rank_response}, there exists an optimal solution of~\eqref{prob:sdr} whose covariance operators admit the representation in~\eqref{eq:restricted_covariances}. Consequently, problem~\eqref{prob:Pm} provides an exact finite-dimensional reformulation of~\eqref{prob:sdr}, and an optimal full-space solution can be recovered from the optimal matrix solution of~\eqref{prob:Pm} through \eqref{eq:restricted_covariances}.
\end{remark}


\subsection{Restricted and Full-Space Optimality Conditions}
\label{subsec:optimality_conditions}

We next characterize whether the current search subspace $\Ucal_m$ is sufficient to attain full-space optimality. These conditions will also provide the basis for the subsequent search-space enrichment. Suppose that $\mathcal P_m$ satisfies Slater's condition, and let
$
(
\{\bm Q_{k,m}^{\star}\}_{k=1}^{K},
\bm Q_{s,m}^{\star},
\{\lambda_{k,m}\}_{k=1}^{K},
\mu_m
)
$
be a primal--dual optimal Karush–Kuhn–Tucker (KKT) point of~\eqref{prob:Pm}, where $\lambda_{k,m}\ge0$ and $\mu_m\ge0$ are the dual multipliers associated with the SINR constraints and the transmit-power constraint, respectively.
The KKT conditions of $\mathcal P_m$ are 
\begin{subnumcases}{\label{eq:restricted_KKT}}
	\begin{aligned}
		&(1+\gamma_k)
		\Tr(\bm H_{k,m}\bm Q_{k,m}^{\star})
		-
		\gamma_k
		\Tr(\bm H_{k,m}\bm Q_m^{\star})
		\\[-1mm]
		&\hspace{1.2cm}
		\ge
		\gamma_k\sigma_k^2,
		\qquad  k=1,\ldots,K,
	\end{aligned}
	\label{eq:restricted_KKT_primal_sinr}
	\\
	\Tr(\bm Q_m^\star)\le P_{\max},
	~
	\bm Q_{j,m}^{\star}\succeq\bm 0,
	~ j\in\{s,1,\ldots,K\},
	\label{eq:restricted_KKT_primal_other}
	\\
	\mu_m\ge0,~~\lambda_{k,m}\ge0,
	\quad k=1,\ldots,K,
	\label{eq:restricted_KKT_scalar_dual}
	\\
	\bm Z_{j,m}^{(r)}\succeq\bm 0,
	\qquad
	j\in\{s,1,\ldots,K\},
	\label{eq:restricted_KKT_dual_slack}
	\\
	\begin{aligned}
		\lambda_{k,m}
		\Big[
		&(1+\gamma_k)
		\Tr(\bm H_{k,m}\bm Q_{k,m}^{\star})
		-
		\gamma_k
		\Tr(\bm H_{k,m}\bm Q_m^{\star})
		\\[-1mm]
		&-
		\gamma_k\sigma_k^2
		\Big]
		=0,
		\qquad k=1,\ldots,K,
	\end{aligned}
	\label{eq:restricted_KKT_comp_sinr}
	\\
	\mu_m
	\left(
	P_{\max}
	-
	\Tr(\bm Q_m^\star)
	\right)
	=0,
	\label{eq:restricted_KKT_comp_power}
	\\
	\Tr\!\left(
	\bm Z_{j,m}^{(r)}
	\bm Q_{j,m}^{\star}
	\right)
	=0,
	\qquad
	j\in\{s,1,\ldots,K\},
	\label{eq:restricted_KKT_comp_matrix}
\end{subnumcases}
where 
\begingroup
\small
\begin{align}
	&\bm Z_{j,m}^{(r)}
	\triangleq
	\mu_m\bm I_{r_m}
	-
	\bm B_{j,m}^{(r)},
	\qquad
	j\in\{s,1,\ldots,K\},
	&&
	\\
	&\bm B_{s,m}^{(r)}
	\triangleq
	\bm R_m
	-
	\sum_{k=1}^{K}
	\gamma_k\lambda_{k,m}\bm H_{k,m},
	&&
	\\
	&\bm B_{i,m}^{(r)}
	\triangleq
	\bm R_m
	\!+\!
	\lambda_{i,m}\bm H_{i,m}
	\!-\!
	\sum_{k\neq i}
	\gamma_k\lambda_{k,m}\bm H_{k,m},
	\, i=1,\ldots,K.
\end{align}
\endgroup

We next consider the KKT conditions of the full-space SDR~\eqref{prob:sdr}. Lift the restricted covariance solution to $\Hcal$ as 
\begin{equation}
	\widehat{\Qcal}_{j,m}
	=
	\bm\Psi_m
	\bm Q_{j,m}^{\star}
	\bm\Psi_m^*,
	\qquad
	j\in\{s,1,\ldots,K\},
	\label{eq:lifted_restricted_solution}
\end{equation}
and define 
\begin{equation}
	\widehat{\Qcal}_m
	=
	\sum_{k=1}^{K}\widehat{\Qcal}_{k,m}
	+
	\widehat{\Qcal}_{s,m}.
	\label{eq:lifted_total_covariance}
\end{equation}
Using the same scalar dual multipliers $\{\lambda_{k,m}\}_{k=1}^{K}$ and $\mu_m$, the  full-space KKT conditions of problem~\eqref{prob:sdr} are 
{\small\begin{subnumcases}{\label{eq:full_KKT}}
	\begin{aligned}
		&(1+\gamma_k)
		\Tr(\Hcal_k\widehat{\Qcal}_{k,m})
		-
		\gamma_k
		\Tr(\Hcal_k\widehat{\Qcal}_m)
		\\[-1mm]
		&\hspace{1.2cm}
		\ge
		\gamma_k\sigma_k^2,
		\qquad k=1,\ldots,K,
	\end{aligned}
	\label{eq:full_KKT_primal_sinr}
	\\
	\Tr(\widehat{\Qcal}_m)\le P_{\max},
	~
	\widehat{\Qcal}_{j,m}\succeq 0,
	~ j\in\{s,1,\ldots,K\},
	\label{eq:full_KKT_primal_other}
	\\
	\mu_m\ge0,~~\lambda_{k,m}\ge0,
	\quad k=1,\ldots,K,
	\label{eq:full_KKT_scalar_dual}
	\\
	\Zcal_{j,m}\succeq 0,
	\qquad
	j\in\{s,1,\ldots,K\},
	\label{eq:full_KKT_dual_slack}
	\\
	\begin{aligned}
		\lambda_{k,m}
		\Big[
		&(1+\gamma_k)
		\Tr(\Hcal_k\widehat{\Qcal}_{k,m})
		-
		\gamma_k
		\Tr(\Hcal_k\widehat{\Qcal}_m)
		\\[-1mm]
		&-
		\gamma_k\sigma_k^2
		\Big]
		=0,
		\qquad k=1,\ldots,K,
	\end{aligned}
	\label{eq:full_KKT_comp_sinr}
	\\
	\mu_m
	\left(
	P_{\max}
	-
	\Tr(\widehat{\Qcal}_m)
	\right)
	=0,
	\label{eq:full_KKT_comp_power}
	\\
	\Tr\!\left(
	\Zcal_{j,m}
	\widehat{\Qcal}_{j,m}
	\right)
	=0,
	\qquad
	j\in\{s,1,\ldots,K\},
	\label{eq:full_KKT_comp_operator}
\end{subnumcases}}
where 
\begin{align}
	\Zcal_{j,m}
	& \triangleq
	\mu_m\Ical
	-
	\Bcal_{j,m},
	\qquad
	j\in\{s,1,\ldots,K\}.
	\label{eq:full_dual_slack_def} \\
	\Bcal_{s,m}
	& \triangleq
	\Rcal
	-
	\sum_{k=1}^{K}
	\gamma_k\lambda_{k,m}\Hcal_k,
	\label{eq:Bs}
	\\
	\Bcal_{i,m}
	&\triangleq
	\Rcal
	+
	\lambda_{i,m}\Hcal_i
	-
	\sum_{k\neq i}
	\gamma_k\lambda_{k,m}\Hcal_k,
	~
	i=1,\ldots,K,
	\label{eq:Bi}
\end{align}
The relationship between the two KKT systems is summarized as follows.

\begin{Fact}\label{fact:restricted_full_kkt}
	The lifted primal--dual optimal KKT point of $\mathcal P_m$ satisfies \eqref{eq:full_KKT_primal_sinr}--\eqref{eq:full_KKT_scalar_dual} and \eqref{eq:full_KKT_comp_sinr}--\eqref{eq:full_KKT_comp_operator}. The only condition that is not guaranteed is the full-space dual-slack feasibility condition
	\eqref{eq:full_KKT_dual_slack}.
\end{Fact}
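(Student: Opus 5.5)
The plan is to verify each group of conditions in \eqref{eq:full_KKT} by transporting the corresponding restricted condition through the isometry $\bm\Psi_m$. The basic tool is the trace identity
\begin{equation*}
	\Tr\!\left(\mathcal T\,\bm\Psi_m\bm X\bm\Psi_m^*\right)
	=
	\Tr\!\left(\bm\Psi_m^*\mathcal T\bm\Psi_m\,\bm X\right),
\end{equation*}
valid for any bounded $\mathcal T$ on $\Hcal$ and any $\bm X\in\mathbb C^{r_m\times r_m}$. It follows from cyclicity of the trace, which is legitimate here because $\bm\Psi_m\bm X\bm\Psi_m^*$ has finite rank. It can equally be checked by evaluating \eqref{eq:tr_def_1} in an orthonormal basis of $\Hcal$ that extends $\{\bm u_1,\ldots,\bm u_{r_m}\}$.

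\textbf{Primal conditions.} Applying the identity with $\mathcal T=\Hcal_k$ gives $\Tr(\Hcal_k\widehat{\Qcal}_{k,m})=\Tr(\bm H_{k,m}\bm Q_{k,m}^\star)$. By linearity it also gives $\Tr(\Hcal_k\widehat{\Qcal}_m)=\Tr(\bm H_{k,m}\bm Q_m^\star)$. With $\mathcal T=\Ical$ and $\bm\Psi_m^*\bm\Psi_m=\bm I_{r_m}$, we get $\Tr(\widehat{\Qcal}_m)=\Tr(\bm Q_m^\star)$. Hence \eqref{eq:full_KKT_primal_sinr}, the power part of \eqref{eq:full_KKT_primal_other}, and the complementarity conditions \eqref{eq:full_KKT_comp_sinr}--\eqref{eq:full_KKT_comp_power} are literally the restricted conditions \eqref{eq:restricted_KKT_primal_sinr}, \eqref{eq:restricted_KKT_comp_sinr}, \eqref{eq:restricted_KKT_comp_power}, rewritten with the same scalars.

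For positivity, note that $\inner{\bm x}{\widehat{\Qcal}_{j,m}\bm x}=(\bm\Psi_m^*\bm x)^H\bm Q_{j,m}^\star(\bm\Psi_m^*\bm x)\ge0$, so each lifted operator is PSD. Each lifted operator is also trace class, since it has finite rank. The scalar dual condition \eqref{eq:full_KKT_scalar_dual} holds because the same multipliers are reused.

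\textbf{Operator complementarity.} This is the one step that needs a small argument. Apply the identity with $\mathcal T=\Zcal_{j,m}$. Because $\bm\Psi_m^*\Rcal\bm\Psi_m=\bm R_m$, $\bm\Psi_m^*\Hcal_k\bm\Psi_m=\bm H_{k,m}$, and $\bm\Psi_m^*\Ical\bm\Psi_m=\bm I_{r_m}$, we have
\begin{equation*}
	\bm\Psi_m^*\Zcal_{j,m}\bm\Psi_m=\mu_m\bm I_{r_m}-\bm B_{j,m}^{(r)}=\bm Z_{j,m}^{(r)}.
\end{equation*}
Therefore $\Tr(\Zcal_{j,m}\widehat{\Qcal}_{j,m})=\Tr(\bm Z_{j,m}^{(r)}\bm Q_{j,m}^\star)=0$ by \eqref{eq:restricted_KKT_comp_matrix}, which is \eqref{eq:full_KKT_comp_operator}.

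\textbf{The unguaranteed condition.} The same compression shows that \eqref{eq:restricted_KKT_dual_slack} gives only $\bm\Psi_m^*\Zcal_{j,m}\bm\Psi_m\succeq0$. This is positivity of $\Zcal_{j,m}$ on $\Ucal_m$, and it says nothing about $\inner{\bm x}{\Zcal_{j,m}\bm x}$ for $\bm x$ with components in $\Ucal_m^\perp$. To make the gap concrete, I would take $\Ucal_m=\spann\{\bm h_1,\ldots,\bm h_K\}$ and choose $\bm x\perp\Ucal_m$. Then $\inner{\bm x}{\Zcal_{s,m}\bm x}=\mu_m\norm{\bm x}^2-\inner{\bm x}{\Rcal\bm x}$, which is negative whenever $\Rcal$ has a Rayleigh quotient exceeding $\mu_m$ on $\Ucal_m^\perp$. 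So \eqref{eq:full_KKT_dual_slack} is the only condition that may fail.
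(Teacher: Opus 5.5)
Your proposal is correct and follows essentially the same route as the paper. It transports the scalar residuals and positivity through the isometry $\bm\Psi_m$ by trace cyclicity, obtains $\Tr(\Zcal_{j,m}\widehat{\Qcal}_{j,m})=\Tr(\bm Z_{j,m}^{(r)}\bm Q_{j,m}^\star)=0$ from the compression identity $\bm\Psi_m^*\Zcal_{j,m}\bm\Psi_m=\bm Z_{j,m}^{(r)}$, and notes that this compression certifies positivity of $\Zcal_{j,m}$ only on $\Ucal_m$. Your extra illustration with $\bm x\perp\spann\{\bm h_1,\ldots,\bm h_K\}$ makes concrete why \eqref{eq:full_KKT_dual_slack} can fail, which the paper leaves implicit, though it remains conditional on $\Rcal$ having a Rayleigh quotient above $\mu_m$ on $\Ucal_m^\perp$.
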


{\it Proof.}~See Appendix~\ref{app:restricted_full_kkt}. \hfill $\blacksquare$

Fact~\ref{fact:restricted_full_kkt} reveals the key distinction between restricted and full-space optimality. Specifically, the restricted dual-slack matrix satisfies
\begin{equation}
	\bm Z_{j,m}^{(r)}
	=
	\bm\Psi_m^*
	\Zcal_{j,m}
	\bm\Psi_m
	\succeq0,
	\label{eq:restricted_full_slack_relation}
\end{equation}
which guarantees
\begin{equation}
	\inner{\bm u}
	{\Zcal_{j,m}\bm u}_{\Hcal}
	\ge0,
	\qquad
	\forall\,\bm u\in\Ucal_m.
	\label{eq:restricted_slack_on_Um}
\end{equation}
Full-space dual feasibility \eqref{eq:full_KKT_dual_slack}, however, requires
\begin{equation}
	\inner{\bm u}
	{\Zcal_{j,m}\bm u}_{\Hcal}
	\ge0,
	\qquad
	\forall\,\bm u\in\Hcal.
	\label{eq:full_slack_on_H}
\end{equation}
Hence, if the dual variables obtained from the restricted problem~\eqref{prob:Pm} also satisfy~\eqref{eq:full_slack_on_H}, then, by Fact~\ref{fact:restricted_full_kkt}, the lifted restricted solution satisfies the full-space KKT conditions and is globally optimal for~\eqref{prob:sdr}. Otherwise, the current search subspace $\Ucal_m$
is insufficient to certify full-space optimality, and there exists at least one block $j$ for which the dual-slack operator may become negative along a direction outside $\Ucal_m$, i.e., there may exist a unit-norm $\bm u\in\Hcal\setminus\Ucal_m$ such that
\begin{equation}
	\inner{\bm u}
	{\Zcal_{j,m}\bm u}_{\Hcal}
	<0.
	\label{eq:full_slack_violation}
\end{equation}

This motivates quantifying the departure from full-space dual feasibility. Since $\Rcal$ is compact and each $\Hcal_k$ is rank one, $\Bcal_{j,m}$ is compact and self-adjoint. By the spectral theorem for compact self-adjoint operators~\cite{kreyszig1978introductory}, its largest Rayleigh quotient is finite and attained whenever it is positive. Define
\begin{equation}
	\rho_{j,m}
	\triangleq
	\sup_{\|\bm u\|_{\Hcal}=1}
	\inner{\bm u}
	{\Bcal_{j,m}\bm u}_{\Hcal}.
	\label{eq:rho_j}
\end{equation}
Because $\Zcal_{j,m}=\mu_m\Ical-\Bcal_{j,m}$, the full-space dual-feasibility condition is equivalent to
\begin{equation}
	\rho_{j,m}\le\mu_m.
	\label{eq:rho_dual_feasibility}
\end{equation}
Accordingly, define the block-wise dual-slack violation as
\begin{equation}
	\Delta_{j,m}
	\triangleq
	\rho_{j,m}-\mu_m,
	\qquad
	j\in\{s,1,\ldots,K\},
	\label{eq:Delta_jm}
\end{equation}
and the maximum violation
\begin{equation}
	\Delta_m
	\triangleq
	\max_{j\in\{s,1,\ldots,K\}}
	\Delta_{j,m}.
	\label{eq:Delta_m}
\end{equation}
Hence, $\Delta_m\le0$ certifies full-space dual feasibility and, by Fact~\ref{fact:restricted_full_kkt}, full-space optimality of the lifted restricted solution. In contrast, $\Delta_m>0$ indicates a violation of the full-space dual-slack conditions.  
Moreover, the positive part of $\Delta_m$ provides a computable bound on the remaining optimality gap.
\begin{proposition}
	\label{prop:gap}
	Let $p_m$ and $p^\star$ denote the optimal values of \eqref{prob:Pm} and \eqref{prob:sdr}, respectively, and define
	\begin{equation}
		\delta_m
		\triangleq
		[\Delta_m]_+
		=
		\max\{\Delta_m,0\}.
		\label{eq:delta_m}
	\end{equation}
	Then
	\begin{equation}
		0
		\le
		p^\star-p_m
		\le
		\delta_mP_{\max}.
		\label{eq:gap_bound}
	\end{equation}
	In particular, if $\Delta_m\le0$, then
	\begin{equation}
		p_m=p^\star,
		\label{eq:global_certificate}
	\end{equation}
	and the lifted solution in
	\eqref{eq:lifted_restricted_solution} is globally optimal for
	\eqref{prob:sdr}.
\end{proposition}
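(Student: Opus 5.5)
The plan is weak duality for $\mathcal P_{\rm SDR}$, using dual multipliers obtained from the restricted KKT point after a correction of the power multiplier. The lower bound $p_m\le p^\star$ is immediate. Any feasible point of $\mathcal P_m$ lifts through \eqref{eq:restricted_covariances} to a feasible point of \eqref{prob:sdr} with the same objective value, because $\Tr(\Rcal\bm\Psi_m\bm Q\bm\Psi_m^*)=\Tr(\bm R_m\bm Q)$ and the same holds for $\Hcal_k$ and $\Ical$.

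For the upper bound, let $(\{\Qcal_k\},\Qcal_s)$ be any feasible point of \eqref{prob:sdr}, for example the optimizer from Lemma~\ref{lem:sdr_existence}. Form the Lagrangian with the multipliers $\lambda_{k,m}$ and the modified power multiplier $\tilde\mu\triangleq\mu_m+\delta_m$. Regrouping terms by block gives
\[
\Tr(\Rcal\Qcal)\le \Tr(\Rcal\Qcal)+\sum_k\lambda_{k,m}\big[(1+\gamma_k)\Tr(\Hcal_k\Qcal_k)-\gamma_k\Tr(\Hcal_k\Qcal)-\gamma_k\sigma_k^2\big]+\tilde\mu\big(P_{\max}-\Tr(\Qcal)\big).
\]
The right-hand side equals
\[
\tilde\mu P_{\max}-\sum_k\lambda_{k,m}\gamma_k\sigma_k^2-\sum_{j}\Tr\big((\tilde\mu\Ical-\Bcal_{j,m})\Qcal_j\big).
\]
The key check is that the coefficient operator of $\Qcal_i$ is $\Bcal_{i,m}$ and the coefficient operator of $\Qcal_s$ is $\Bcal_{s,m}$. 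For $\Qcal_i$, the $\Rcal$ term contributes $\Rcal$, user $i$'s constraint contributes $(1+\gamma_i)\lambda_{i,m}\Hcal_i-\gamma_i\lambda_{i,m}\Hcal_i=\lambda_{i,m}\Hcal_i$, and each $k\neq i$ contributes $-\gamma_k\lambda_{k,m}\Hcal_k$; this matches \eqref{eq:Bi}, and \eqref{eq:Bs} follows in the same way. By the definition of $\Delta_m$ we have $\rho_{j,m}\le\mu_m+\Delta_m\le\tilde\mu$, so $\tilde\mu\Ical-\Bcal_{j,m}\succeq0$. Since each $\Qcal_j\in\mathcal S_1^+(\Hcal)$, the trace of the product of a bounded PSD operator and a PSD trace-class operator is nonnegative, so the last sum can be dropped. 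This gives $p^\star\le \tilde\mu P_{\max}-\sum_k\lambda_{k,m}\gamma_k\sigma_k^2$.

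It remains to identify $\mu_mP_{\max}-\sum_k\lambda_{k,m}\gamma_k\sigma_k^2$ with $p_m$. This is strong duality for $\mathcal P_m$, which holds under the assumed Slater condition. It can also be read off directly from the restricted KKT conditions \eqref{eq:restricted_KKT}. Expand the restricted Lagrangian at $\bm Q^\star_{j,m}$. The complementarity conditions \eqref{eq:restricted_KKT_comp_sinr} and \eqref{eq:restricted_KKT_comp_power} make the multiplier terms vanish, and \eqref{eq:restricted_KKT_comp_matrix} gives $\Tr(\bm Z^{(r)}_{j,m}\bm Q^\star_{j,m})=0$. Hence $p_m=\Tr(\bm R_m\bm Q^\star_m)=\mu_mP_{\max}-\sum_k\lambda_{k,m}\gamma_k\sigma_k^2$. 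Combining this with the previous bound yields $p^\star\le p_m+\delta_mP_{\max}$, which is \eqref{eq:gap_bound}. When $\Delta_m\le0$ we have $\delta_m=0$, so $p^\star=p_m$. The lifted solution \eqref{eq:lifted_restricted_solution} is feasible by Fact~\ref{fact:restricted_full_kkt} and attains $p_m$, so it is globally optimal.

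The main obstacle is making the infinite-dimensional steps rigorous. The trace identity $\Tr(\Rcal\Qcal)=\sum_j\Tr(\Rcal\Qcal_j)$ and the rearrangement of the Lagrangian must be justified; both hold because every term pairs a bounded operator with a trace-class operator and the trace is linear on $\mathcal S_1$. The nonnegativity $\Tr(\mathcal Z\Qcal)\ge0$ for $\mathcal Z\succeq0$ bounded and $\Qcal\in\mathcal S_1^+$ must also be shown. For this I would use a spectral decomposition $\Qcal=\sum_n\lambda_n\bm e_n\otimes\bm e_n$ with $\lambda_n\ge0$, which gives $\Tr(\mathcal Z\Qcal)=\sum_n\lambda_n\inner{\bm e_n}{\mathcal Z\bm e_n}\ge0$.
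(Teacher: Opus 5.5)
Your proposal is correct and follows essentially the paper's argument. You shift the power multiplier to $\mu_m+\delta_m$ to obtain a dual-feasible point for $\mathcal P_{\rm SDR}$, then combine weak duality for the full-space SDR with $d_m=p_m$ for the restricted SDP. The only difference is that you derive explicitly, by writing out the Lagrangian and using the restricted KKT complementarity, the facts that the paper leaves implicit in its appeals to weak and strong duality: the block coefficients $\Bcal_{j,m}$ and the identity $p_m=\mu_mP_{\max}-\sum_k\lambda_{k,m}\gamma_k\sigma_k^2$.
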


{\it Proof.}~See Appendix~\ref{app:gap_bound}. \hfill $\blacksquare$

Proposition~\ref{prop:gap} provides a criterion for evaluating the sufficiency of the current search subspace. When $\Delta_m>0$, the current subspace fails to certify full-space optimality, and additional electromagnetic modes are required. 
The next subsection develops the corresponding adaptive enrichment strategy.

\subsection{Missing-Mode Identification and Space Enrichment}
\label{subsec:mode_enrichment}

When $\Delta_m>0$, select the most violated covariance block as
\begin{equation}
	j_m^\star
	\in
	\argmax_{j\in\{s,1,\ldots,K\}}
	\left(
	\rho_{j,m}-\mu_m
	\right),
	\label{eq:jstar}
\end{equation}
which gives
\begin{equation}
	\rho_{j_m^\star,m}
	=
	\mu_m+\Delta_m
	>0.
	\label{eq:positive_rho_jstar}
\end{equation}
Since $\Bcal_{j_m^\star,m}$ is compact and self-adjoint, its positive Rayleigh supremum is attained by a unit-norm eigenfunction $\bm v_m\in\Hcal$~\cite{kreyszig1978introductory}, satisfying
\begin{equation}
	\Bcal_{j_m^\star,m}\bm v_m
	=
	\rho_{j_m^\star,m}\bm v_m .
	\label{eq:violating_eigenmode}
\end{equation}

The identified eigenmode corresponds to a violating direction of the full-space dual-slack operator. Specifically,
\begin{equation}
	\inner{\bm v_m}
	{\Zcal_{j_m^\star,m}\bm v_m}_{\Hcal}
	=
	\mu_m-\rho_{j_m^\star,m}
	=
	-\Delta_m
	<0 .
	\label{eq:negative_direction}
\end{equation}
On the other hand, the restricted KKT condition guarantees
\begin{equation}
	\inner{\bm u}
	{\Zcal_{j_m^\star,m}\bm u}_{\Hcal}
	\ge0,
	\qquad
	\forall\,\bm u\in\Ucal_m .
	\label{eq:positive_on_Um}
\end{equation}
Thus, $\bm v_m\notin\Ucal_m$ and $\bm v_m$ identifies an electromagnetic direction absent from the current search subspace. Let
\begin{equation}
	\bm \Pi_m
	=
	\bm\Psi_m\bm\Psi_m^*
	\label{eq:projection}
\end{equation}
denote the orthogonal projector onto $\Ucal_m$. The component of $\bm v_m$ orthogonal to $\Ucal_m$ is given by
\begin{equation}
	\widetilde{\bm u}_{m+1}
	=
	(\Ical-\bm\Pi_m)\bm v_m ,
	\label{eq:newmode_raw}
\end{equation}
and the normalized enrichment mode is
\begin{equation}
	\bm u_{m+1}
	=
	\frac{
		\widetilde{\bm u}_{m+1}
	}
	{
		\norm{\widetilde{\bm u}_{m+1}}_{\Hcal}
	}.
	\label{eq:newmode}
\end{equation}
The search subspace is then updated as
\begin{equation}
	\Ucal_{m+1}
	=
	\Ucal_m
	\oplus
	\spann\{\bm u_{m+1}\},
	\label{eq:space_update}
\end{equation}
which guarantees 
\begin{equation} \label{eq:nest_U}
	\Ucal_m\subseteq\Ucal_{m+1}.
\end{equation}
The resulting adaptive electromagnetic-mode enrichment procedure is summarized in Algorithm~\ref{alg:mode_enrichment}.
\begin{algorithm}[t]
	\caption{Dual-Slack-Driven Electromagnetic-Mode Enrichment}
	\label{alg:mode_enrichment}
	\begin{algorithmic}[1]
		
		\State Initialize $\Ucal_0=\Ucal_c$ and construct an isometry
		$\bm\Psi_0$ for $\Ucal_0$.
		
		\State Set $m=0$ and choose the tolerance $\epsilon_{\rm sp}\ge0$.
		
		\Repeat
		
		\State Construct  $\bm R_m$ and $\{\bm H_{k,m}\}_{k=1}^{K}$. 
		
		\State Solve $\eqref{prob:Pm}$ to obtain
		$\{\bm Q_{k,m}^{\star}\}$, $\bm Q_{s,m}^{\star}$,
		$\{\lambda_{k,m}\}$, and $\mu_m$.
		
		\State Construct the full-space operators
		$\{\Bcal_{j,m}\}_{j\in\{s,1,\ldots,K\}}$
		using \eqref{eq:Bs}--\eqref{eq:Bi}.
		
		\State Evaluate $\{\rho_{j,m}\}$ and $\Delta_m$.
		
		\If{$\Delta_m>\epsilon_{\rm sp}$}
	
		\State Select the most violated block $j_m^\star$ and compute the
		dominant eigenfunction $\bm v_m$ of
		$\Bcal_{j_m^\star,m}$.
		
		\State Generate the normalized enrichment mode
		$\bm u_{m+1}$ using \eqref{eq:newmode}.
		
		\State Update $\Ucal_{m+1}$ using \eqref{eq:space_update}, construct $\bm\Psi_{m+1}$, and set $m\leftarrow m+1$.  
		
		\EndIf
		
		\Until{$\Delta_m\le\epsilon_{\rm sp}$}
		
		\State Lift the final restricted covariance solution to $\Hcal$
		and apply the rank-one recovery in
		Theorem~\ref{thm:sdr_tight}.
		
	\end{algorithmic}
\end{algorithm}

The nested property in~\eqref{eq:nest_U} guarantees that the sequence $\{p_m\}$ is monotonically nondecreasing, i.e., $p_m \le p_{m+1}$. However, monotonicity alone does
not ensure convergence to the full-space optimum, i.e., $p_m\rightarrow p^\star$ as $m\rightarrow\infty$. To establish the global convergence of the proposed enrichment procedure, we introduce the following regularity assumption.

\begin{assumption}
	\label{ass:dual_bounded}
	Each restricted SDP~\eqref{prob:Pm} admits an optimal primal--dual
	KKT pair. Moreover, the corresponding optimal dual multipliers are
	uniformly bounded, i.e., there exists a constant $C<\infty$ such that
	\begin{equation}
		\mu_m+\sum_{k=1}^{K}\lambda_{k,m}
		\le C,
		\qquad \forall m.
		\label{eq:dual_bounded}
	\end{equation}
\end{assumption}

Assumption~\ref{ass:dual_bounded} imposes a  uniform primal--dual regularity condition on the sequence of restricted SDPs. 
For example, such boundedness can be ensured under a uniform Slater condition. We have the following global convergence result. 

\begin{theorem}
	\label{thm:global_convergence}
	Under Assumption~\ref{ass:dual_bounded}, the mode enrichment sequence generated by Algorithm~\ref{alg:mode_enrichment} satisfies
	\begin{equation}
		p_m \uparrow p^\star,
		\label{eq:pm_to_pinf}
	\end{equation}
	where $p^\star$ is the optimal value of the full-space operator SDR~\eqref{prob:sdr}. 
\end{theorem}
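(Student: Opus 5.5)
The plan is to combine three ingredients: the monotonicity of $\{p_m\}$, the gap certificate of Proposition~\ref{prop:gap}, and the fact that each enrichment mode makes the previously violating eigenfunction part of every later search subspace. We take $\epsilon_{\rm sp}=0$.

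\emph{Finite termination and the limit.} If Algorithm~\ref{alg:mode_enrichment} stops at some finite $m$, then $\Delta_m\le 0$. Proposition~\ref{prop:gap} then gives $p_m=p^\star$, and the sequence is constant from that point on, so the claim holds. Otherwise the algorithm produces an infinite nested sequence $\Ucal_0\subseteq\Ucal_1\subseteq\cdots$ by \eqref{eq:nest_U}. Each restricted problem is feasible for \eqref{prob:sdr} after lifting, so $p_m\le p_{m+1}\le p^\star$. Hence $p_m\uparrow p_\infty\le p^\star$ for some limit $p_\infty$. Since $0\le p^\star-p_m\le \delta_m P_{\max}$ by \eqref{eq:gap_bound}, and $p^\star-p_m$ is nonincreasing, it suffices to show $\liminf_m \delta_m=0$. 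Monotonicity then upgrades convergence along a subsequence to $p_m\uparrow p^\star$.

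\emph{Key observation: a violating direction stays certified.} By construction, $\bm v_m=\bm\Pi_m\bm v_m+\norm{\widetilde{\bm u}_{m+1}}\bm u_{m+1}\in\Ucal_{m+1}\subseteq\Ucal_n$ for every $n>m$. Apply the restricted dual-feasibility property \eqref{eq:restricted_slack_on_Um} at iteration $n$, which follows from \eqref{eq:restricted_full_slack_relation} with $\bm\Psi_n$. Writing $j=j_m^\star$, this gives
\begin{equation*}
\mu_n-\inner{\bm v_m}{\Bcal_{j,n}\bm v_m}_{\Hcal}\ge 0,\qquad \forall n>m .
\end{equation*}
At iteration $m$, \eqref{eq:negative_direction} gives $\mu_m-\inner{\bm v_m}{\Bcal_{j,m}\bm v_m}_{\Hcal}=-\Delta_m$. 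Subtracting the two relations and using \eqref{eq:Bs}--\eqref{eq:Bi}, in which $\Rcal$ cancels and only rank-one terms $\Hcal_k$ remain with coefficients at most $\max(1,\gamma_k)$, we obtain for all $n>m$
\begin{equation*}
\Delta_m\le |\mu_n-\mu_m|+\sum_{k=1}^{K}\max(1,\gamma_k)\norm{\bm h_k}_{\Hcal}^2\,|\lambda_{k,n}-\lambda_{k,m}| .
\end{equation*}

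\emph{Extracting a convergent subsequence.} By Assumption~\ref{ass:dual_bounded}, the vectors $(\mu_m,\lambda_{1,m},\ldots,\lambda_{K,m})$ lie in a compact subset of $\mathbb R^{K+1}$. Hence some subsequence $m_1<m_2<\cdots$ has convergent duals. Taking $m=m_\ell$ and $n=m_{\ell+1}$ in the bound above, the right-hand side tends to $0$. Therefore $\limsup_\ell \Delta_{m_\ell}\le 0$, so $\delta_{m_\ell}\to0$, and \eqref{eq:gap_bound} yields $p_{m_\ell}\to p^\star$. Monotonicity then gives $p_m\uparrow p^\star$.

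\emph{Expected obstacle.} The delicate step is the legitimacy of the restricted dual-slack inequality at a later iteration $n$, evaluated on a vector from an earlier one. This requires that the multipliers $(\mu_n,\lambda_{k,n})$ used in Proposition~\ref{prop:gap} and \eqref{eq:restricted_slack_on_Um} are exactly those returned by the KKT solution of $\mathcal P_n$. Assumption~\ref{ass:dual_bounded} guarantees that such a KKT pair exists. I would also check two further points. First, the gap bound of Proposition~\ref{prop:gap} must hold for any KKT multipliers of $\mathcal P_m$, not only for a particular choice. Second, the argument does not need compactness of $\Rcal$ beyond its use in guaranteeing that $\bm v_m$ exists via \eqref{eq:violating_eigenmode}. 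Notably, no weak-convergence argument on the orthonormal modes $\bm u_m$ is needed.
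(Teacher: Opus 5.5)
Your proof is correct, and it reaches the conclusion by a more elementary route than the paper's.

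Both arguments rest on the same mechanism. The violating eigenfunction $\bm v_m$ lies in every later $\Ucal_n$, where restricted dual feasibility gives $\inner{\bm v_m}{\Zcal_{j,n}\bm v_m}_{\Hcal}\ge0$. Assumption~\ref{ass:dual_bounded} then supplies convergent multiplier subsequences.

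The paper argues by contradiction from $\delta_{m_\ell}\ge\varepsilon$. It passes to a subsequence with a common maximizing block $j^\star$. It then uses compactness of the limiting pricing operator $\Bcal_{j^\star,\infty}$, together with $\bm v_\ell=\rho_{j^\star,m_\ell}^{-1}\Bcal_{j^\star,m_\ell}\bm v_\ell$ and $\rho_{j^\star,m_\ell}\ge\varepsilon$, to extract a strong limit $\bm v_\infty$. Passing to the limit in both inequalities gives opposite signs for $\inner{\bm v_\infty}{\Zcal_\infty\bm v_\infty}_{\Hcal}$.

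You instead compare the two inequalities at finite iterations. Because $\Rcal$ cancels in $\Bcal_{j,n}-\Bcal_{j,m}$, you get the explicit bound $\Delta_m\le|\mu_n-\mu_m|+\sum_{k}\max(1,\gamma_k)\norm{\bm h_k}_{\Hcal}^2|\lambda_{k,n}-\lambda_{k,m}|$ for all $n>m$. What this buys:
\begin{itemize}
\item You need no common block index, since restricted feasibility holds for all blocks at once.
\item You need no compactness of the limit operator, no lower bound on $\rho$, and no convergence of eigenfunctions.
\item Compactness of $\Rcal$ enters only through the existence of $\bm v_m$.
\item The argument tolerates an approximate maximizer with $\inner{\bm v_m}{\Zcal_{j,m}\bm v_m}_{\Hcal}\le-c\,\Delta_m$ for fixed $c\in(0,1]$.
\item It has a quantitative reading: a large certificate at step $m$ forces the multipliers to move substantially afterwards.
\end{itemize}

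You obtain $\delta_{m_\ell}\to0$ only along a subsequence and then use monotonicity of $p_m$, which suffices for the theorem. Applying your bound to a convergent sub-subsequence of an arbitrary subsequence also recovers the paper's full statement $\delta_m\to0$.

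In fact, the paper's detour through $\bm v_\infty$ is dispensable even within its own proof. Since $\norm{\Zcal_{j^\star,m_\ell}-\Zcal_{j^\star,m_r}}_{\rm op}\to0$ as $\ell,r\to\infty$, its two inequalities at $\bm v_\ell$ are already incompatible.
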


{\it Proof.}~See Appendix~\ref{app:global_convergence}. \hfill $\blacksquare$

Together with the SDR tightness result in Theorem~\ref{thm:sdr_tight}, Theorem~\ref{thm:global_convergence} shows that the adaptive enrichment procedure converges to the global optimum of the original infinite-dimensional beamforming problem~\eqref{prob:P0}.

\section{Simulation Results}
\label{sec:simulation}

This section presents numerical results to evaluate the structural properties of the optimized transmit covariance and the performance of the proposed adaptive electromagnetic-mode enrichment. 

\vspace{-5pt}
\subsection{Simulation Setup and Numerical Implementation}
\label{subsec:simulation_setup}

We consider a square CAPA on the $x$--$y$ plane operating at $f_c=3$~GHz, with wavelength $\lambda=0.1$~m and aperture size $L_x=L_y=0.8$~m, i.e., $8\lambda\times8\lambda$.
Two communication users are located at $(-1.05,0.38,4.50)\,\mathrm m$ and $(0.85,-0.50,4.50)\,\mathrm m$, with fixed polarization combiners
\begin{equation}
	\bm\psi_1
	=
	\frac{[1,\,0.25,\,0]^{T}}{\sqrt{1+0.25^2}},
	\qquad
	\bm\psi_2
	=
	\frac{[0.25,\,1,\,0]^{T}}{\sqrt{1+0.25^2}}.
\end{equation}
Unless otherwise specified, both users require an SINR of $16$~dB, with $\sigma_k^2=10^{-2}$ and $P_{\max}=1$.
Two extended sensing targets lie in the plane $z=2.3$~m, with supports
\begin{align}
	\Omega_1 &: x\in[-1.20,-0.15],\quad
	y\in[-0.05,0.75], \nonumber\\
	\Omega_2 &: x\in[0.25,0.95],\quad
	y\in[-0.70,0.05],
\end{align}
and equal sensing weights $\omega_1=\omega_2=1$. Each communication response is normalized to unit norm, and the sensing operator is normalized by its largest eigenvalue.

All continuous-domain integrals are evaluated using tensor-product Gauss--Legendre quadrature. Let $\{s_{x,i},w_{x,i}^{\rm T}\}_{i=1}^{M_{\rm T}}$ and $\{s_{y,j},w_{y,j}^{\rm T}\}_{j=1}^{M_{\rm T}}$ denote the aperture quadrature nodes and weights. Then, for $\bm a,\bm b\in\Hcal$,
\begin{equation}
	\inner{\bm a}{\bm b}_{\Hcal}
	\approx
	\sum_{i=1}^{M_{\rm T}}
	\sum_{j=1}^{M_{\rm T}}
	w_{x,i}^{\rm T}w_{y,j}^{\rm T}
	\bm a^H(\bm s_{ij})\bm b(\bm s_{ij}),
	\label{eq:quadrature_aperture}
\end{equation}
where $\bm s_{ij}=[s_{x,i},s_{y,j},0]^T$. Similarly, with target-region quadrature nodes and weights $\{\xi_{x,a}^{(q)},w_{x,a}^{(q)}\}$ and $\{\xi_{y,b}^{(q)},w_{y,b}^{(q)}\}$,
\begin{equation}
	\norm{\Acal_q\bm a}_{\Ycal_q}^2
	\approx
	\sum_{a=1}^{M_{\rm S}}
	\sum_{b=1}^{M_{\rm S}}
	w_{x,a}^{(q)}w_{y,b}^{(q)}
	\norm{(\Acal_q\bm a)(\bm\xi_{q,ab})}_2^2.
	\label{eq:quadrature_target}
\end{equation}
We set $M_{\rm T}=13$ and $M_{\rm S}=11$.

The adaptive enrichment is initialized with $\Ucal_c=\spann\{\bm h_1,\ldots,\bm h_K\}$ and terminates when $\Delta_m\le\epsilon_{\rm sp}$, with $\epsilon_{\rm sp}=10^{-7}$.
The limiting objective value is denoted by $p_\infty$ and its numerically converged value is used for normalization. For effective-rank evaluation, eigenvalues exceeding $10^{-3}$ of the largest eigenvalue are counted.

\subsection{Benchmark Schemes}
\label{subsec:benchmarks}

Two non-adaptive modal baselines are considered. All methods are initialized with the same communication response space $\Ucal_c$ and solve the same restricted SDR; they differ only in how the additional modes are selected.

\emph{1) Truncated-SVD:}
The additional modes are chosen as the dominant eigenmodes of the sensing operator $\Rcal$,
\begin{equation}
	\Rcal\bm v_\ell
	=
	\lambda_\ell\bm v_\ell,
	\qquad
	\lambda_1\geq\lambda_2\geq\cdots,
\end{equation}
and are appended to $\Ucal_c$ in descending order of $\lambda_\ell$.

\emph{2) 2-D Spatial Fourier:}
The aperture current is expanded over two-dimensional spatial harmonics
\begin{equation}\label{eq:last_main}
	\phi_{n_x,n_y}(\bm s)
	=
	\exp\!\left[
	j2\pi
	\left(
	\frac{n_xs_x}{L_x}
	+
	\frac{n_ys_y}{L_y}
	\right)
	\right],
\end{equation}
with $n_x,n_y\in\{-5,\ldots,5\}$. The resulting modes are introduced from low to high spatial frequency and orthogonalized with respect to the current search space.

In contrast, the proposed method selects each enrichment direction from the maximally violated full-space dual-slack block, thereby accounting for both sensing and communication requirements.

\subsection{Finite-Rank Structure and Convergence}
We first compare the eigenvalue spectra of the sensing operator $\Rcal$ and the optimal total covariance $\Qcal^\star$ in Fig.~\ref{fig:eigenvalue}. The spectrum of $\Rcal$ decays slowly, with many non-negligible eigenvalues over the displayed range, reflecting the rich electromagnetic response of the extended targets. In contrast, the spectrum of $\Qcal^\star$ drops sharply after only a few dominant components. Thus, although the sensing operator contains many effective spatial modes, the optimal transmission concentrates power on a low-dimensional set of electromagnetic directions. This behavior is consistent with the finite-rank covariance structure established in
Theorem~\ref{thm:finite_rank_sdr}.
\begin{figure}[H]
	\centering
	\includegraphics[width=0.8\linewidth,trim={0.6cm 0.05cm 0.6cm 0.6cm}, clip]{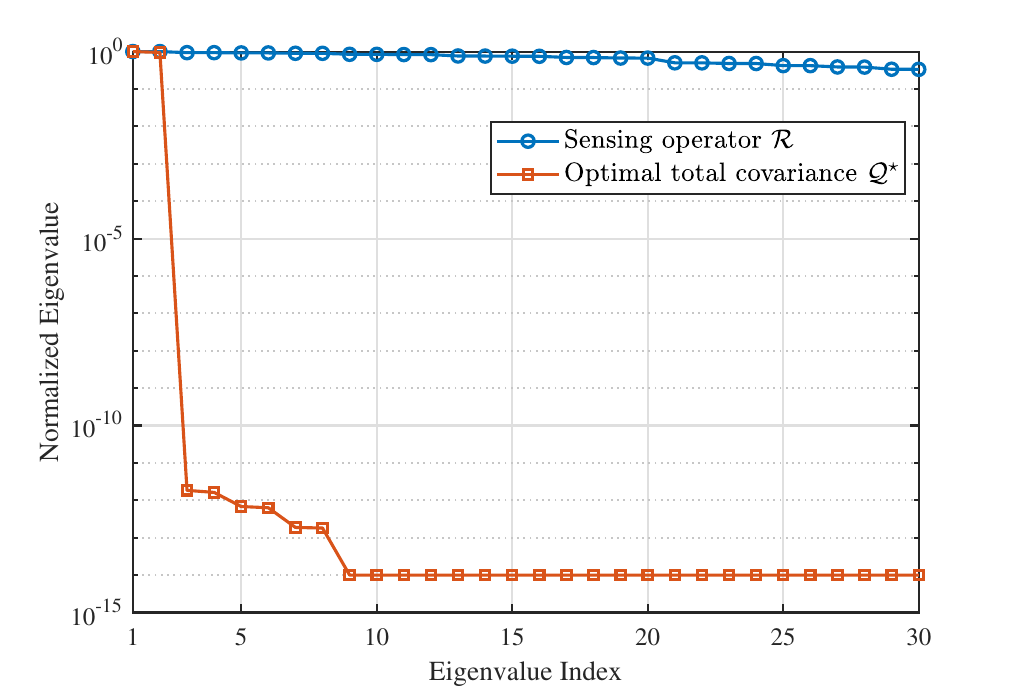}
	\caption{Normalized eigenvalue spectra of $\Rcal$ and $\Qcal^\star$.} 
	\label{fig:eigenvalue}
	\vspace{-4mm}
\end{figure}

To examine the effect of target extent, each target region is uniformly scaled about its center by a factor $\eta$. As shown in Fig.~\ref{fig:rank}, the effective rank of $\Rcal$ increases from about $50$ to more than $130$ as $\eta$ increases from $0.6$ to $1.4$, indicating that larger targets introduce more effective sensing modes. However, the effective rank of $\Qcal^\star$ remains nearly unchanged and substantially smaller than that of $\Rcal$. Thus, enlarging the target regions significantly enriches the sensing response, whereas the optimal transmission continues to rely on only a limited number of effective electromagnetic directions.
\begin{figure}[t]
	\centering
	\includegraphics[width=0.8\linewidth,trim={0.6cm 0.05cm 0.6cm 0.6cm}, clip]{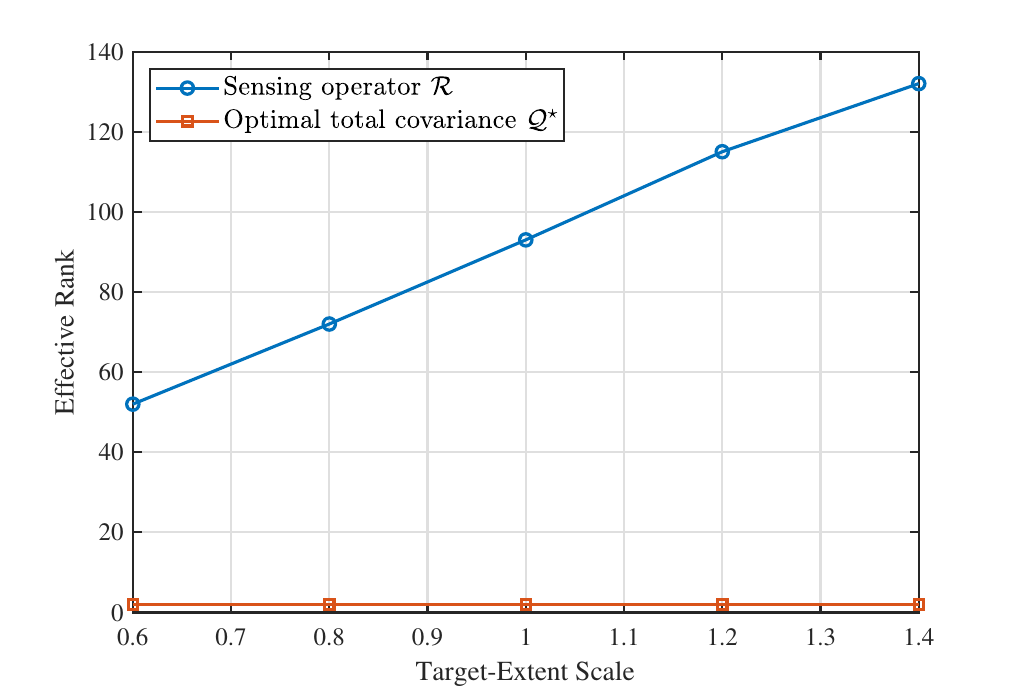}
	\caption{Effective ranks of $\Rcal$ and $\Qcal^\star$ versus target extent.} 
	\label{fig:rank}
	\vspace{-4mm}
\end{figure}

Fig.~\ref{fig:convergence} shows the convergence of the proposed adaptive enrichment, where $p_m$ is normalized by the limiting objective value $p_\infty$, and $m$ denotes the number of modes added to the initial communication response space. The normalized objective increases monotonically with the enriched search space. The initial communication subspace achieves about $83\%$ of $p_\infty$. After one and two added modes, the objective increases to about $90\%$ and above $95\%$, respectively, and approaches $p_\infty$ with only three added modes. Further enrichment provides only marginal improvement, showing that a small number of adaptively selected directions captures most of the attainable sensing performance.

\begin{figure}[t]
	\centering
	\includegraphics[width=0.8\linewidth,trim={0.6cm 0.05cm 0.6cm 0.6cm}, clip]{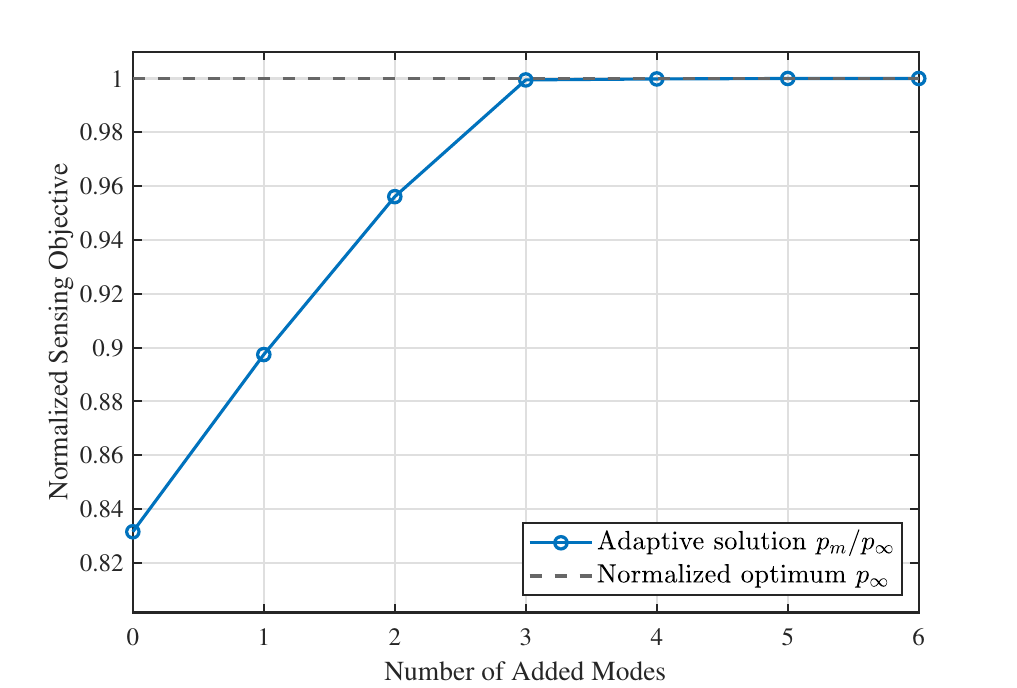}
	\caption{Convergence of the proposed adaptive enrichment. } 
	\label{fig:convergence}
	\vspace{-4mm}
\end{figure}

Fig.~\ref{fig:gap} compares the relative objective gap $(p_\infty-p_m)/p_\infty$ with the normalized certificate $[\Delta_m]_+P_{\max}/p_\infty$. Both decrease rapidly as the search space is enriched, while the certified bound remains above the observed gap throughout the iterations. This confirms that $\Delta_m$ provides a reliable measure of the residual full-space optimality gap. At the same time, the same quantity identifies whether additional electromagnetic modes are still needed and provides a natural stopping criterion for the adaptive enrichment.
\begin{figure}[t]
	\centering
	\includegraphics[width=0.8\linewidth,trim={0.6cm 0.05cm 0.6cm 0.6cm}, clip]{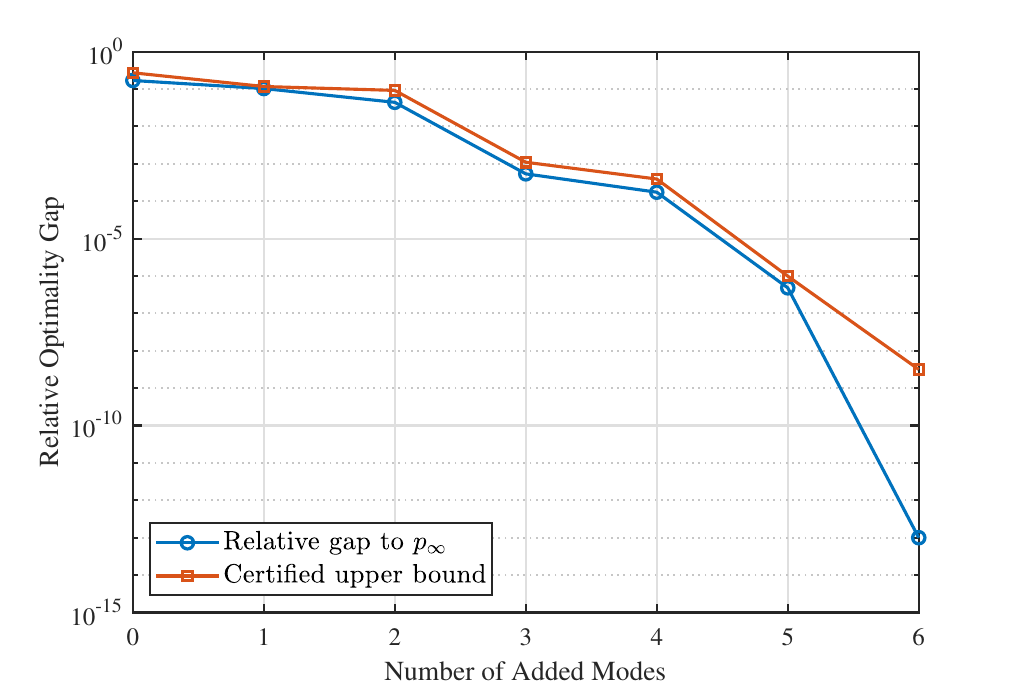}
	\caption{Relative objective gap and certified upper bound.}  
	\label{fig:gap}
	\vspace{-4mm}
\end{figure}

\subsection{Modal Efficiency and Sensing Performance}
\label{subsec:modal_results}
We next compare the proposed adaptive enrichment with the truncated-SVD and spatial-Fourier bases. Fig.~\ref{fig:effmode} shows the normalized total illumination versus the number of added modes. Since all methods start from the same communication response space, they have the same objective value when no additional mode is included. The proposed method improves rapidly and reaches nearly $p_{\infty}$ with only three added modes. The truncated-SVD basis also approaches the optimum as more sensing eigenmodes are included, but requires substantially more modes to achieve a comparable level. In contrast, the spatial-Fourier basis improves much more slowly and remains clearly below the optimal even after $30$ modes are added.
This difference stems from the mode-selection criterion. Truncated-SVD selects modes solely according to the sensing operator $\Rcal$, whereas the spatial-Fourier basis is determined only by the aperture geometry. In contrast, the proposed method selects each new mode according to the dual-slack violation of the full ISAC problem, so the added directions reflect both the sensing objective and the active communication constraints. As a result, near-optimal illumination can be achieved with far fewer modes.

\begin{figure}[H]
	\centering
	\includegraphics[width=0.8\linewidth,trim={0.6cm 0.05cm 0.6cm 0.6cm}, clip]{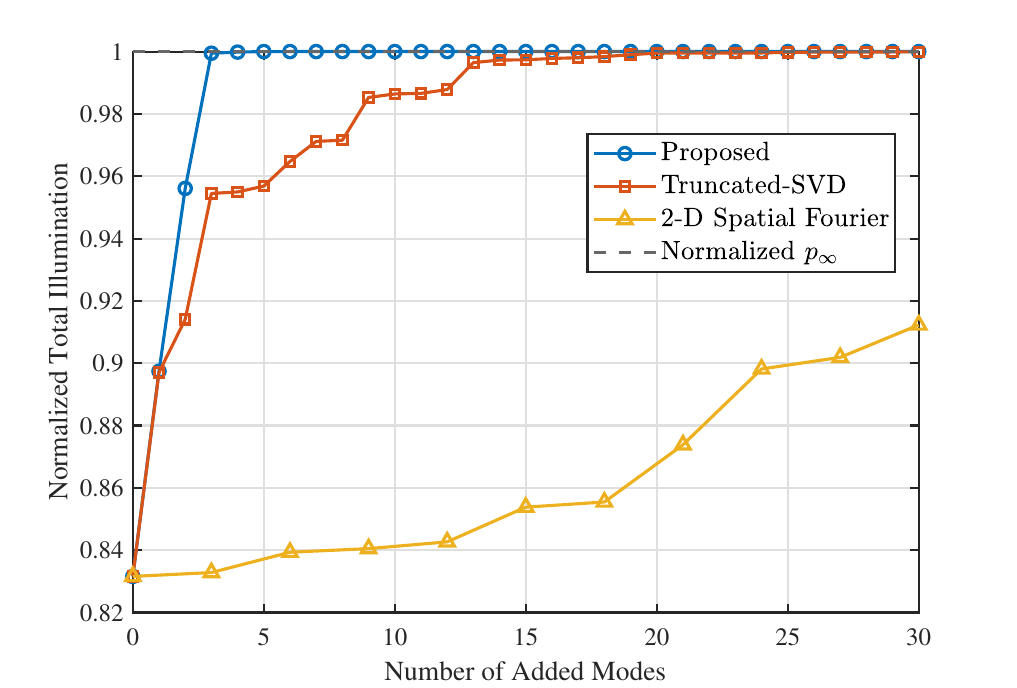}
	\caption{Modal efficiency of the proposed adaptive enrichment.} 
	\label{fig:effmode}
	\vspace{-4mm}
\end{figure}

The impact of the communication requirements is further examined in Fig.~\ref{fig:sinr}. The number of added modes is fixed to three, while the common SINR threshold varies from $8$ to $16$~dB. All illumination values are normalized by the high-accuracy reference at $8$~dB, so that the sensing degradation caused by stricter communication requirements remains visible. As the SINR threshold increases, the optimal value decreases, and the proposed method closely tracks it over the entire range. In contrast, the gap of truncated-SVD becomes more pronounced at high SINR, while the spatial-Fourier basis consistently yields lower illumination. The growing gap of truncated-SVD indicates that sensing-dominant modes are not always the most effective once the communication constraints become more restrictive.  By incorporating the dual information of the joint ISAC problem into mode selection, the proposed method better adapts to increasingly stringent communication constraints while maintaining high sensing performance.
\begin{figure}[t]
	\centering
	\includegraphics[width=0.8\linewidth,trim={0.6cm 0.05cm 0.6cm 0.6cm}, clip]{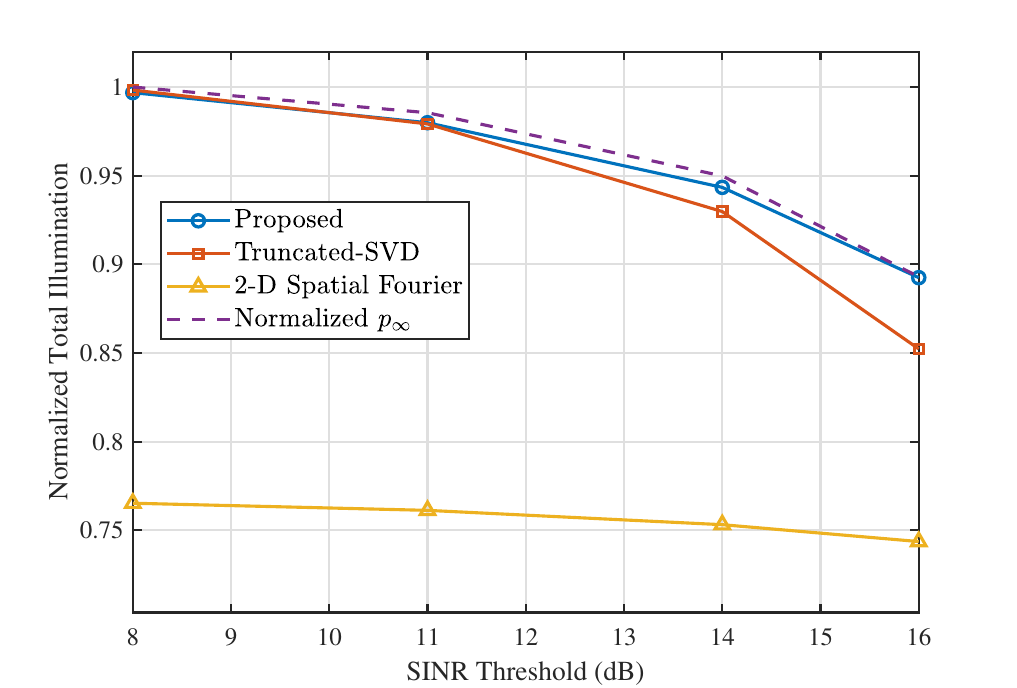}
	\caption{Communication--sensing tradeoff with three additional modes.} 
	\label{fig:sinr}
	\vspace{-4mm}
\end{figure}

\begin{figure*}[t]
	\centering
	\includegraphics[width=\linewidth,
	trim={0.6cm 1.2cm 0.2cm 0.8cm},
	clip]{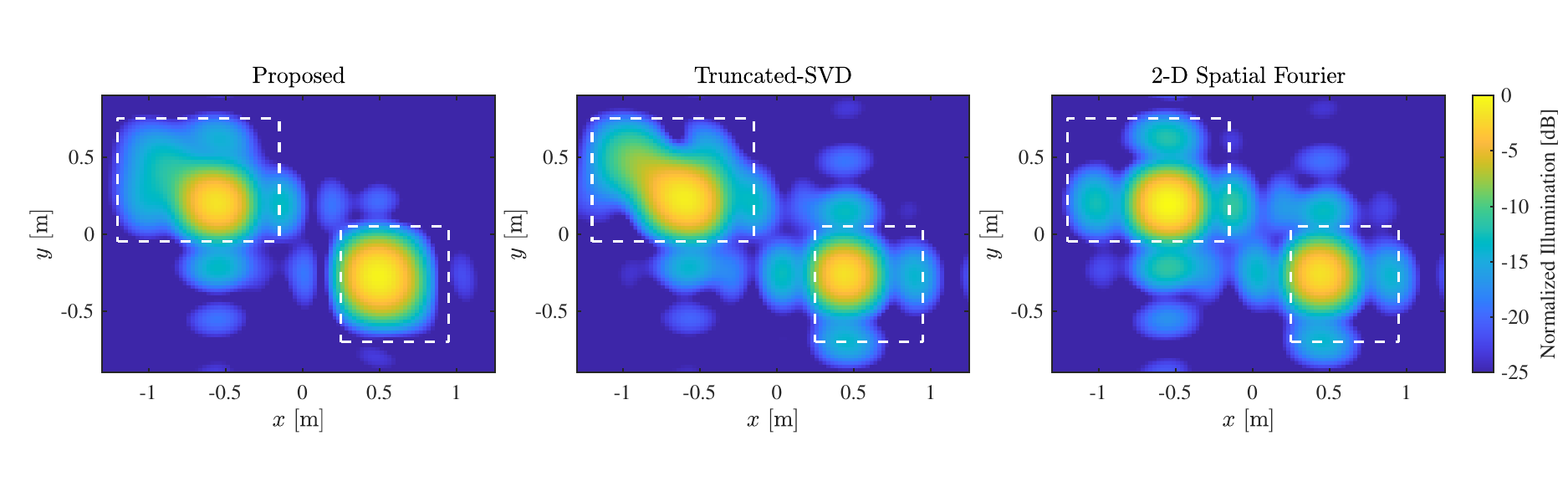}
	\caption{Extended-target illumination patterns at $z=2.3$~m under a
		common two-mode budget. The dashed rectangles indicate the
		prescribed target supports.}
	\label{fig:illu}
	\vspace{-4mm}
\end{figure*}

Fig.~\ref{fig:illu} compares the illumination patterns of the three methods under the same two-mode budget. All panels use the same normalization, and the dashed rectangles indicate the target supports. The proposed method concentrates the illumination mainly within the two target regions, with relatively limited leakage outside the prescribed
supports. The truncated-SVD design also illuminates both targets effectively, but exhibits more noticeable energy spread into the surrounding area. The spatial-Fourier design provides the least concentrated illumination, with stronger radiation outside the target regions. These results show that the proposed enrichment strategy better aligns the synthesized sensing field with the prescribed target regions.

\section{Conclusion}
\label{sec:conclusion}

This paper studied continuous-domain CAPA-ISAC beamforming for spatially extended sensing targets.
We showed that the resulting infinite-rank target response does not
necessarily require an infinite-dimensional transmit solution; instead,
the optimal transmission remains supported by a finite-dimensional
electromagnetic subspace. This shifts the main difficulty from optimizing
over the full function space to identifying the unknown modal support of
the optimum. Motivated by this structure, we developed an adaptive mode-enrichment
framework that discovers the required electromagnetic modes directly from
full-space optimality information, avoiding a prescribed basis or
truncation. The resulting sequence of finite-dimensional problems converges
to the full-space optimum and achieves high modal efficiency in the
numerical experiments. These results suggest adaptive subspace discovery as
a promising principle for broader continuous-aperture optimization
problems.

\appendices
\section{Proof of Lemma~\ref{lem:sdr_existence}}
\label{app:sdr_existence}

Let
\begin{equation}
	\mathfrak X
	=
	\mathcal S_1(\Hcal)^{K+1}
\end{equation}
be endowed with the componentwise weak-* topology \(\sigma(\mathcal S_1(\Hcal),\mathcal K(\Hcal))\). Since \(\mathcal S_1(\Hcal)=\mathcal K(\Hcal)^*\), the Banach--Alaoglu theorem implies that every trace-norm bounded subset of \(\mathfrak X\) is relatively compact in this topology. For any feasible covariance tuple
\((\{\Qcal_k\}_{k=1}^{K},\Qcal_s)\), positivity gives
\begin{equation}
	\norm{\Qcal_k}_1=\Tr(\Qcal_k),
	\qquad
	\norm{\Qcal_s}_1=\Tr(\Qcal_s),
\end{equation}
and the power constraint yields
\begin{equation}
	\sum_{k=1}^{K}\norm{\Qcal_k}_1
	+\norm{\Qcal_s}_1
	\le P_{\max}.
	\label{eq:app_trace_bound}
\end{equation}
Hence the feasible set is contained in a weak-* compact product ball.

It remains to show that the feasible set is weak-* closed. Consider a sequence of feasible covariance tuples converging componentwise weak-* to \((\{\Qcal_k\}_{k=1}^{K},\Qcal_s)\). For any \(\bm u\in\Hcal\), the rank-one operator \(\bm u\otimes\bm u\) is compact. Thus, for every component \(\Qcal_{i,n}\rightharpoonup^*\Qcal_i\),
\begin{equation}
	\inner{\bm u}{\Qcal_i\bm u}_{\Hcal}
	=
	\lim_n
	\inner{\bm u}{\Qcal_{i,n}\bm u}_{\Hcal}
	\ge0,
\end{equation}
which proves \(\Qcal_i\succeq0\); the same holds for \(\Qcal_s\). Moreover, the dual norm on \(\mathcal S_1(\Hcal)\) is weak-* lower semicontinuous. Using positivity,
\begin{align*}
	\sum_{k=1}^{K}\Tr(\Qcal_k)+\Tr(\Qcal_s)
	&=
	\sum_{k=1}^{K}\norm{\Qcal_k}_1+\norm{\Qcal_s}_1
	\\
	&\le
	\liminf_n
	(
	\sum_{k=1}^{K}\Tr(\Qcal_{k,n})
	+\Tr(\Qcal_{s,n})
	)
	\\
	&\le P_{\max},
\end{align*}
so the power constraint is preserved.

For the SINR constraints, each \(\Hcal_k=\bm h_k\otimes\bm h_k\) is rank one and hence compact. Thus \(\Qcal\mapsto\Tr(\Hcal_k\Qcal)\) is weak-* continuous, so the affine SINR inequalities in \eqref{prob:sdr_sinr} define weak-* closed sets. Therefore the feasible set of \eqref{prob:sdr} is weak-* compact.
Finally, \(\Rcal\) is compact since each target illumination operator is induced by a Hilbert--Schmidt aperture-to-region operator, and \(\Rcal\) is their finite weighted sum. Hence \(\Qcal\mapsto\Tr(\Rcal\Qcal)\) is weak-* continuous. Since the feasible set is nonempty and weak-* compact, the objective attains its maximum ~\cite{kreyszig1978introductory} at a feasible point. Thus \eqref{prob:sdr} admits an optimal solution.

\vspace{-5pt}
\section{Proof of Theorem~\ref{thm:finite_rank_response}}
\label{app:finite_rank_response}

Let $\Pcal_{\rm fr}$ be the orthogonal projector onto
$\Ucal_{\rm fr}$ in \eqref{eq:finite_response_space}. For any feasible
tuple $(\{\Qcal_k\}_{k=1}^{K},\Qcal_s)$ of \eqref{prob:sdr}, define
\begin{equation}
	\overline{\Qcal}_j
	=
	\Pcal_{\rm fr}\Qcal_j\Pcal_{\rm fr},
	\qquad
	j\in\{s,1,\ldots,K\}.
	\label{eq:app_projected_covariances}
\end{equation}
Each $\overline{\Qcal}_j$ is positive semidefinite and trace class, with
\begin{equation}
	\ran(\overline{\Qcal}_j)
	\subseteq
	\Ucal_{\rm fr}.
	\label{eq:app_projected_range}
\end{equation}
Since $\bm h_k\in\Ucal_{\rm fr}$,
$\Pcal_{\rm fr}\bm h_k=\bm h_k$, and hence
\begin{align}
	\Tr(\Hcal_k\overline{\Qcal}_j)
	&=
	\inner{\bm h_k}{\Pcal_{\rm fr}\Qcal_j\Pcal_{\rm fr}\bm h_k}_{\Hcal}
	\nonumber\\
	&=
	\inner{\bm h_k}{\Qcal_j\bm h_k}_{\Hcal}
	=
	\Tr(\Hcal_k\Qcal_j).
	\label{eq:app_projected_channel_terms}
\end{align}
so all SINR constraints are preserved.

Since $\Rcal$ is self-adjoint and $\ran(\Rcal)\subseteq\Ucal_{\rm fr}$, its orthogonal complement is contained in $\ker(\Rcal)$, which gives
$\Rcal=\Pcal_{\rm fr}\Rcal\Pcal_{\rm fr}$. By trace cyclicity,
\begin{align}
	\Tr(\Rcal\overline{\Qcal}_j)
	=
	\Tr(\Rcal\Pcal_{\rm fr}\Qcal_j\Pcal_{\rm fr})
	=
	\Tr(\Rcal\Qcal_j),
	\label{eq:app_projected_sensing_terms}
\end{align}
and the sensing objective is unchanged. Moreover, since $0\preceq\Pcal_{\rm fr}\preceq\Ical$,
\begin{align}
	\Tr(\overline{\Qcal}_j)
	=
	\Tr(\Qcal_j^{1/2}\Pcal_{\rm fr}\Qcal_j^{1/2})\le
	\Tr(\Qcal_j),
	\label{eq:app_projected_trace}
\end{align}
so the projection does not increase the transmit power. Hence, every feasible point of \eqref{prob:sdr} can be projected onto $\Ucal_{\rm fr}$ while preserving the SINR values and sensing objective.

Applying this construction to an optimal solution, whose existence follows
from Lemma~\ref{lem:sdr_existence}, yields another optimal solution satisfying
\eqref{eq:finite_response_range}. Since $\Ucal_{\rm fr}$ is finite dimensional,
\begin{equation}
	\rank(\overline{\Qcal}_j)
	=
	\dim\ran(\overline{\Qcal}_j)
	\le
	\dim(\Ucal_{\rm fr})
	<\infty,
\end{equation}
which proves \eqref{eq:finite_response_rank_bound} and completes the proof.

\section{Proof of Theorem~\ref{thm:finite_rank_sdr}}
\label{app:finite_rank_theorem}

Let $\{\Qcal_k^\star\}_{k=1}^{K}$ and $\Qcal_s^\star$ be an optimal primal solution, with optimal dual multipliers $\{\lambda_k^\star\}_{k=1}^{K}$ and $\mu^\star$ satisfying the full-space KKT conditions. Define
\begin{align}
	\Bcal_s^\star
	&=
	\Rcal-\sum_{k=1}^{K}\gamma_k\lambda_k^\star\Hcal_k,
	\label{eq:app_Bs_star}
	\\
	\Bcal_i^\star
	&=
	\Rcal+\lambda_i^\star\Hcal_i
	-\sum_{k\neq i}\gamma_k\lambda_k^\star\Hcal_k,
	\quad i=1,\ldots,K,
	\label{eq:app_Bi_star}
\end{align}
and the dual-slack operators
\begin{equation}
	\Zcal_j^\star
	=
	\mu^\star\Ical-\Bcal_j^\star
	\succeq0,
	\qquad j\in\{s,1,\ldots,K\}.
	\label{eq:app_Zstar_finite_rank}
\end{equation}
The full-space complementary-slackness conditions give
\begin{equation}
	\Tr(\Zcal_j^\star\Qcal_j^\star)=0,
	\qquad j\in\{s,1,\ldots,K\}.
	\label{eq:app_full_complementarity}
\end{equation}
Since both operators are positive semidefinite,
\begin{align}
	0
	&=
	\Tr\!\left(
	(\Qcal_j^\star)^{1/2}
	\Zcal_j^\star
	(\Qcal_j^\star)^{1/2}
	\right)
	\nonumber\\
	&=
	\left\|
	(\Zcal_j^\star)^{1/2}
	(\Qcal_j^\star)^{1/2}
	\right\|_{\rm HS}^{2}.
	\label{eq:app_HS_zero}
\end{align}
Hence $(\Zcal_j^\star)^{1/2}(\Qcal_j^\star)^{1/2}=0$, and therefore
\begin{equation}
	\operatorname{supp}(\Qcal_j^\star)
	\subseteq
	\ker(\Zcal_j^\star).
	\label{eq:app_support_kernel}
\end{equation}
Since $\Ucal_c=\spann\{\bm h_1,\ldots,\bm h_K\}$ is finite dimensional while $\Rcal$ has infinite rank, there exists a unit vector $\bm u\in\Ucal_c^\perp$ such that
\begin{equation}
	\inner{\bm u}{\Rcal\bm u}_{\Hcal}>0.
	\label{eq:app_positive_sensing_direction}
\end{equation}
Indeed, if $\inner{\bm u}{\Rcal\bm u}_{\Hcal}=0$ for all $\bm u\in\Ucal_c^\perp$, then the positivity of $\Rcal$ would imply $\ran(\Rcal^{1/2})\subseteq\Ucal_c$, forcing $\Rcal$ to have finite rank. Since $\Hcal_k\bm u=0$ for all $k$, dual feasibility of $\Zcal_s^\star$ yields
\begin{equation}
	0
	\le
	\inner{\bm u}{\Zcal_s^\star\bm u}_{\Hcal}
	=
	\mu^\star
	-
	\inner{\bm u}{\Rcal\bm u}_{\Hcal},
\end{equation}
so $\mu^\star>0$.

Finally, $\Rcal$ is compact and every $\Hcal_k$ is rank one, so each $\Bcal_j^\star$ is compact and self-adjoint. From \eqref{eq:app_Zstar_finite_rank} and $\mu^\star>0$, 
\begin{equation}
	\ker(\Zcal_j^\star)
	=
	\ker(\mu^\star\Ical-\Bcal_j^\star)
	\label{eq:app_dual_kernel_eigenspace}
\end{equation}
is either trivial or the eigenspace of $\Bcal_j^\star$ associated with the nonzero eigenvalue $\mu^\star$. Since every nonzero eigenvalue of a compact operator has finite multiplicity~[Theorem~8.3-1,~\citen{kreyszig1978introductory}], $\dim\ker(\Zcal_j^\star)<\infty$. Together with \eqref{eq:app_support_kernel}, this gives
\begin{equation}
	\rank(\Qcal_j^\star)<\infty,
	\qquad j\in\{s,1,\ldots,K\},
\end{equation}
which proves Theorem~\ref{thm:finite_rank_sdr}.

\section{Proof of Theorem~\ref{thm:sdr_tight}}
\label{app:sdr_tightness}

From \eqref{prob:sdr_sinr} and \(\Qcal-\Qcal_k\succeq0\), we have
\begin{equation}
	\Tr(\Hcal_k\Qcal_k^\star)
	\ge
	\gamma_k\sigma_k^2
	>
	0,
	\qquad
	k=1,\ldots,K.
	\label{eq:app_positive_signal}
\end{equation}
Therefore, $\widehat{\bm w}_k$ in
\eqref{eq:rank_recovery} is well defined and nonzero, so that
\begin{equation}
	\rank(\widehat{\Qcal}_k)=1,
	\qquad
	k=1,\ldots,K.
	\label{eq:app_rank_one}
\end{equation}

We next show that
\begin{equation}
	\widehat{\Qcal}_k
	\preceq
	\Qcal_k^\star.
	\label{eq:app_Qhat_le_Q}
\end{equation}
Let
$
\bm a_k
\triangleq
\frac{
	(\Qcal_k^\star)^{1/2}\bm h_k
}{
	\sqrt{\Tr(\Hcal_k\Qcal_k^\star)}
}.
$
Since
\begin{equation}
	\Tr(\Hcal_k\Qcal_k^\star)
	=
	\inner{\bm h_k}
	{\Qcal_k^\star\bm h_k}_{\Hcal}
	=
	\norm{
		(\Qcal_k^\star)^{1/2}\bm h_k
	}_{\Hcal}^{2},
\end{equation}
we have $\norm{\bm a_k}_{\Hcal}=1$. Moreover,
\begin{align}
	\widehat{\Qcal}_k
	=
	\widehat{\bm w}_k
	\otimes
	\widehat{\bm w}_k
	=
	(\Qcal_k^\star)^{1/2}
	(\bm a_k\otimes\bm a_k)
	(\Qcal_k^\star)^{1/2}.
	\label{eq:app_Qhat_factor}
\end{align}
Therefore,
\begin{align}
	\Qcal_k^\star-\widehat{\Qcal}_k
	&=
	(\Qcal_k^\star)^{1/2}
	\left(
	\Ical-\bm a_k\otimes\bm a_k
	\right)
	(\Qcal_k^\star)^{1/2}
	\succeq0,
	\label{eq:app_residual_psd}
\end{align}
which proves~\eqref{eq:app_Qhat_le_Q}.

Since $\Qcal_k^\star$ and $\widehat{\Qcal}_k$ are trace class,
their difference is also trace class. It follows from
\eqref{eq:Qs_recovery} and~\eqref{eq:app_residual_psd} that
\begin{equation}
	\widehat{\Qcal}_s
	=
	\Qcal_s^\star
	+
	\sum_{k=1}^{K}
	\left(
	\Qcal_k^\star-\widehat{\Qcal}_k
	\right)
	\in
	\mathcal S_1^+(\Hcal).
	\label{eq:app_Qs_psd}
\end{equation}

We now verify that the recovery preserves the desired-signal terms and the
aggregate transmit covariance. From~\eqref{eq:rank_recovery},
\begin{align}
	\left|
	\inner{\bm h_k}
	{\widehat{\bm w}_k}_{\Hcal}
	\right|^2
	&=
	\frac{
		\left|
		\inner{\bm h_k}
		{\Qcal_k^\star\bm h_k}_{\Hcal}
		\right|^2
	}{
		\Tr(\Hcal_k\Qcal_k^\star)
	}
	\nonumber\\
	&=
	\Tr(\Hcal_k\Qcal_k^\star),
	\qquad
	k=1,\ldots,K,
	\label{eq:app_desired_preserved}
\end{align}
where we used $\inner{\bm h_k} {\Qcal_k^\star\bm h_k}_{\Hcal}=\Tr(\Hcal_k\Qcal_k^\star)$. Equivalently,
\begin{equation}
	\Tr(\Hcal_k\widehat{\Qcal}_k)
	=
	\Tr(\Hcal_k\Qcal_k^\star).
	\label{eq:app_signal_trace_preserved}
\end{equation}
Furthermore, by~\eqref{eq:Qs_recovery},
\begin{align}
	\sum_{k=1}^{K}\widehat{\Qcal}_k
	+
	\widehat{\Qcal}_s
	&=
	\sum_{k=1}^{K}\widehat{\Qcal}_k
	+
	\Qcal_s^\star
	+
	\sum_{k=1}^{K}
	\left(
	\Qcal_k^\star-\widehat{\Qcal}_k
	\right)
	\nonumber\\
	&=
	\sum_{k=1}^{K}\Qcal_k^\star
	+
	\Qcal_s^\star
	=
	\Qcal^\star.
	\label{eq:app_total_cov_preserved}
\end{align}
Equations~\eqref{eq:app_signal_trace_preserved} and \eqref{eq:app_total_cov_preserved} show that all communication SINRs, the transmit power, and the illumination objective are preserved exactly. Therefore, $\{\widehat{\bm w}_k\}_{k=1}^{K}$ together with $\widehat{\Qcal}_s$ is feasible for \eqref{prob:P0} and attains the optimal value of $\mathcal P_{\rm SDR}$. Since $\mathcal P_{\rm SDR}$ is a relaxation of $\mathcal P_0$, the recovered solution is globally optimal for \eqref{prob:P0}. This proves Theorem~\ref{thm:sdr_tight}.

\section{Proof of Fact~\ref{fact:restricted_full_kkt}}
\label{app:restricted_full_kkt}

Consider the primal--dual optimal KKT point of $\mathcal P_m$ introduced in Section~\ref{subsec:optimality_conditions}, with $\widehat{\Qcal}_{j,m}$ defined in
\eqref{eq:lifted_restricted_solution}. Since
\[
\bm H_{k,m}
=
\bm\Psi_m^*
\Hcal_k
\bm\Psi_m,
\qquad
\bm R_m
=
\bm\Psi_m^*
\Rcal
\bm\Psi_m,
\]
and $\bm\Psi_m^*\bm\Psi_m=\bm I_{r_m}$,
trace cyclicity gives
\begin{align}
	\Tr(
	\Hcal_k\widehat{\Qcal}_{j,m}
	)
	&=
	\Tr(
	\bm H_{k,m}\bm Q_{j,m}^\star
	),
	\label{eq:app_lift_channel}
	\\
	\Tr(
	\widehat{\Qcal}_{j,m}
	)
	&=
	\Tr(
	\bm Q_{j,m}^\star
	),
	\label{eq:app_lift_power}
	\\
	\Tr(
	\Rcal\widehat{\Qcal}_{j,m}
	)
	&=
	\Tr(
	\bm R_m\bm Q_{j,m}^\star
	).
	\label{eq:app_lift_sensing}
\end{align}
Moreover, $\widehat{\Qcal}_{j,m}\succeq0$ whenever $\bm Q_{j,m}^\star\succeq0$. Hence, \eqref{eq:restricted_KKT_primal_sinr} and \eqref{eq:restricted_KKT_primal_other} imply
\eqref{eq:full_KKT_primal_sinr} and \eqref{eq:full_KKT_primal_other}, respectively. The scalar dual-feasibility condition \eqref{eq:full_KKT_scalar_dual} follows directly from \eqref{eq:restricted_KKT_scalar_dual}. Since all scalar constraint residuals are preserved by the lifting, the scalar complementary-slackness conditions \eqref{eq:full_KKT_comp_sinr} and \eqref{eq:full_KKT_comp_power} follow from \eqref{eq:restricted_KKT_comp_sinr} and \eqref{eq:restricted_KKT_comp_power}.

From the definitions of the restricted and full-space pricing operators,
\begin{equation}
	\bm B_{j,m}^{(r)}
	=
	\bm\Psi_m^*\Bcal_{j,m}\bm\Psi_m,
	\label{eq:app_pricing_compression}
\end{equation}
and hence
\begin{equation}
	\bm\Psi_m^*\Zcal_{j,m}\bm\Psi_m
	=
	\mu_m\bm I_{r_m}-\bm B_{j,m}^{(r)}
	=
	\bm Z_{j,m}^{(r)}.
	\label{eq:app_slack_compression}
\end{equation}
Using~\eqref{eq:lifted_restricted_solution}, \eqref{eq:app_slack_compression}, and trace cyclicity, we obtain
\begin{align}
	\Tr(\Zcal_{j,m}\widehat{\Qcal}_{j,m})
	&=
	\Tr\!\left(
	\bm Z_{j,m}^{(r)}
	\bm Q_{j,m}^{\star}
	\right)
	=0,
	\label{eq:app_lifted_complementarity}
\end{align}
which establishes~\eqref{eq:full_KKT_comp_operator}.

The restricted KKT condition guarantees only
\begin{equation}
	\bm\Psi_m^*\Zcal_{j,m}\bm\Psi_m
	=
	\bm Z_{j,m}^{(r)}
	\succeq0,
	\label{eq:app_restricted_slack_psd}
\end{equation}
which ensures nonnegativity of the quadratic form over $\Ucal_m$, but not $\Zcal_{j,m}\succeq0$ over $\Hcal$. Thus, \eqref{eq:full_KKT_dual_slack} is the only full-space KKT condition not automatically inherited from the restricted KKT point, proving Fact~\ref{fact:restricted_full_kkt}.

\section{Proof of Proposition~\ref{prop:gap}}
\label{app:gap_bound}

By the lifting construction in~\eqref{eq:restricted_covariances}, every
feasible solution of the restricted problem~\eqref{prob:Pm} induces a
feasible solution of the full-space SDR~\eqref{prob:sdr} with the same
objective value. Hence,
\begin{equation}
	p_m\le p^\star.
	\label{eq:app_pm_le_pstar}
\end{equation}

Let $\{\lambda_{k,m}\}_{k=1}^{K}$ and $\mu_m$ be an optimal dual
solution of~\eqref{prob:Pm}. By the definitions of $\Delta_m$ and
$\delta_m=[\Delta_m]_+$,
\begin{equation}
	\rho_{j,m}
	\le
	\mu_m+\delta_m,
	\qquad
	j\in\{s,1,\ldots,K\}.
	\label{eq:rho_shift_bound}
\end{equation}
Since
$\rho_{j,m}$ is the largest Rayleigh quotient of
$\Bcal_{j,m}$, \eqref{eq:rho_shift_bound} implies
\begin{equation}
	(\mu_m+\delta_m)\Ical
	-
	\Bcal_{j,m}
	\succeq0,
	\qquad
	j\in\{s,1,\ldots,K\}.
	\label{eq:shifted_full_dual_feasible}
\end{equation}
Therefore, $\{\lambda_{k,m}\}_{k=1}^{K}$ together with
\begin{equation}
	\widetilde{\mu}_m
	=
	\mu_m+\delta_m
\end{equation}
constitutes a feasible dual point of the full-space
SDR~\eqref{prob:sdr}.

Let $d_m$ denote the optimal dual value of the restricted
SDP~\eqref{prob:Pm}. By strong duality,
\begin{equation}
	d_m=p_m.
\end{equation}
Increasing the power multiplier from $\mu_m$ to
$\widetilde{\mu}_m$ changes the dual objective by
$\delta_m P_{\max}$. Hence, the objective value of the above
full-space dual-feasible point is
\begin{equation}
	p_m+\delta_mP_{\max}.
\end{equation}
By weak duality for~\eqref{prob:sdr},
\begin{equation}
	p^\star
	\le
	p_m+\delta_mP_{\max}.
	\label{eq:app_gap_upper}
\end{equation}
Combining~\eqref{eq:app_pm_le_pstar} and
\eqref{eq:app_gap_upper} gives
\begin{equation}
	0
	\le
	p^\star-p_m
	\le
	\delta_mP_{\max},
\end{equation}
which proves~\eqref{eq:gap_bound}.

In particular, if $\Delta_m\le0$, then $\delta_m=0$, and hence
$p_m=p^\star$. Since the lifted restricted solution is feasible for
\eqref{prob:sdr} and attains $p_m$, it is globally optimal for the
full-space SDR.

\section{Proof of Theorem~\ref{thm:global_convergence}}
\label{app:global_convergence}

If the exact enrichment procedure terminates at some finite iteration $m$ with $\Delta_m\le0$, Proposition~\ref{prop:gap} gives $p_m=p^\star$. We therefore consider the case where the enrichment procedure generates an infinite sequence.

We first establish
\begin{equation}
	\delta_m\longrightarrow0.
	\label{eq:app_delta_to_zero}
\end{equation}
Suppose, by contradiction, that \eqref{eq:app_delta_to_zero} does not hold. Then there exist $\varepsilon>0$ and an infinite subsequence
$\{m_\ell\}$ such that
\begin{equation}
	\delta_{m_\ell}\ge\varepsilon,
	\qquad
	\forall\,\ell.
	\label{eq:delta_lower_subseq}
\end{equation}
Since $\delta_{m_\ell}>0$, we have $\Delta_{m_\ell}=\delta_{m_\ell}$. Because there are only $K+1$ covariance blocks, a further subsequence can be selected such that the
same index $j^\star\in\{s,1,\ldots,K\}$ attains the maximum dual-slack violation. For notational simplicity, we retain the notation $\{m_\ell\}$.

Let $\bm v_\ell\triangleq\bm v_{m_\ell}$ denote the unit-norm violating eigenfunction identified at iteration $m_\ell$. By the enrichment rule,
\begin{equation}
	\Bcal_{j^\star,m_\ell}\bm v_\ell
	=
	\rho_{j^\star,m_\ell}\bm v_\ell,
	\qquad
	\norm{\bm v_\ell}_{\Hcal}=1,
	\label{eq:conv_eig_relation}
\end{equation}
where
\begin{equation}
	\rho_{j^\star,m_\ell}
	=
	\mu_{m_\ell}
	+
	\delta_{m_\ell}
	\ge
	\varepsilon.
	\label{eq:conv_positive_eig}
\end{equation}

By Assumption~\ref{ass:dual_bounded}, the nonnegative multiplier sequence
\(
\left\{
\left(
\lambda_{1,m},
\ldots,
\lambda_{K,m},
\mu_m
\right)
\right\}
\)
is uniformly bounded. Hence, after passing to a further subsequence if necessary,
\begin{equation}
	\lambda_{k,m_\ell}
	\longrightarrow
	\lambda_{k,\infty},
	\qquad
	\mu_{m_\ell}
	\longrightarrow
	\mu_\infty.
	\label{eq:dual_subseq_conv}
\end{equation}
Since each pricing operator depends affinely on the scalar dual
multipliers, \eqref{eq:dual_subseq_conv} implies
\begin{equation}
	\norm{
		\Bcal_{j^\star,m_\ell}
		-
		\Bcal_{j^\star,\infty}
	}_{\rm op}
	\longrightarrow0,
	\label{eq:B_op_convergence}
\end{equation}
where $\Bcal_{j^\star,\infty}$ is obtained by replacing the dual
multipliers with their limits. Since $\Rcal$ and $\Hcal_k$ are compact
and self-adjoint, so is $\Bcal_{j^\star,\infty}$.

Since $\{\Bcal_{j^\star,m_\ell}\}$ is uniformly bounded in operator norm,
\begin{align}
	\rho_{j^\star,m_\ell}
	=
	\inner{\bm v_\ell}
	{\Bcal_{j^\star,m_\ell}\bm v_\ell}_{\Hcal}
	\le
	\norm{\Bcal_{j^\star,m_\ell}}_{\rm op}
	\norm{\bm v_\ell}_{\Hcal}^{2},
\end{align}
and hence $\{\rho_{j^\star,m_\ell}\}$ is bounded. Together with
\eqref{eq:conv_positive_eig}, this allows us to extract a further subsequence such that
\begin{equation}
	\rho_{j^\star,m_\ell}
	\longrightarrow
	\rho_\infty
	\ge
	\varepsilon.
	\label{eq:rho_subseq_conv}
\end{equation}

Since $\Bcal_{j^\star,\infty}$ is compact and
$\{\bm v_\ell\}$ is bounded, there exists a further subsequence, still
indexed by $\ell$, such that
\begin{equation}
	\Bcal_{j^\star,\infty}\bm v_\ell
	\longrightarrow
	\bm w_\infty
	\quad
	\text{strongly in }\Hcal
	\label{eq:Binf_v_conv}
\end{equation}
for some $\bm w_\infty\in\Hcal$. Moreover, by
\eqref{eq:B_op_convergence} and
$\norm{\bm v_\ell}_{\Hcal}=1$,
\begin{align}
	&
	\norm{
		\Bcal_{j^\star,m_\ell}\bm v_\ell
		-
		\Bcal_{j^\star,\infty}\bm v_\ell
	}_{\Hcal}
	\nonumber\\
	&\qquad\le
	\norm{
		\Bcal_{j^\star,m_\ell}
		-
		\Bcal_{j^\star,\infty}
	}_{\rm op}
	\longrightarrow0.
	\label{eq:Bv_difference}
\end{align}
Combining~\eqref{eq:Binf_v_conv} and \eqref{eq:Bv_difference} yields
\begin{equation}
	\Bcal_{j^\star,m_\ell}\bm v_\ell
	\longrightarrow
	\bm w_\infty
	\quad
	\text{strongly in }\Hcal.
	\label{eq:Bm_v_conv}
\end{equation}

Using the eigenvalue relation \eqref{eq:conv_eig_relation},
\begin{equation}
	\bm v_\ell
	=
	\rho_{j^\star,m_\ell}^{-1}
	\Bcal_{j^\star,m_\ell}\bm v_\ell.
	\label{eq:v_compact_repr}
\end{equation}
Together with \eqref{eq:rho_subseq_conv} and
\eqref{eq:Bm_v_conv}, this gives
\begin{equation}
	\bm v_\ell
	\longrightarrow
	\bm v_\infty
	\triangleq
	\rho_\infty^{-1}\bm w_\infty
	\quad
	\text{strongly in }\Hcal,
	\label{eq:v_strong_conv}
\end{equation}
Since $\norm{\bm v_\ell}_{\Hcal}=1$ for all $\ell$, continuity of the
norm implies
\begin{equation}
	\norm{\bm v_\infty}_{\Hcal}=1.
	\label{eq:v_inf_unit}
\end{equation}
Defining the limiting dual-slack operator $\Zcal_\infty \triangleq \mu_\infty\Ical - \Bcal_{j^\star,\infty}$, at iteration $m_\ell$, the violating direction satisfies
\begin{align}
	&
	\inner{\bm v_\ell}
	{
		\left(
		\mu_{m_\ell}\Ical
		-
		\Bcal_{j^\star,m_\ell}
		\right)
		\bm v_\ell
	}_{\Hcal}
	\nonumber\\
	&\qquad=
	\mu_{m_\ell}
	-
	\rho_{j^\star,m_\ell}
	=
	-\delta_{m_\ell}
	\le
	-\varepsilon.
	\label{eq:subseq_negative}
\end{align}
Taking $\ell\rightarrow\infty$ using
\eqref{eq:dual_subseq_conv},
\eqref{eq:B_op_convergence}, and
\eqref{eq:v_strong_conv} yields
\begin{equation}
	\inner{\bm v_\infty}
	{\Zcal_\infty\bm v_\infty}_{\Hcal}
	\le
	-\varepsilon.
	\label{eq:limit_negative}
\end{equation}

On the other hand, once $\bm v_\ell$ is identified at iteration
$m_\ell$, the enrichment rule ensures $	\bm v_\ell \in \Ucal_{m_\ell+1}$.
Since the search subspaces are nested, it follows that
\begin{equation}
	\bm v_\ell
	\in
	\Ucal_{m_r},
	\qquad
	r>\ell.
	\label{eq:old_v_in_later_space}
\end{equation}
Applying the restricted dual-slack feasibility condition
\eqref{eq:restricted_slack_on_Um} at iteration $m_r$ to the fixed vector $\bm v_\ell$ gives
\begin{equation}
	\inner{\bm v_\ell}
	{
		\left(
		\mu_{m_r}\Ical
		-
		\Bcal_{j^\star,m_r}
		\right)
		\bm v_\ell
	}_{\Hcal}
	\ge0.
	\label{eq:later_nonnegative}
\end{equation}
For each fixed $\ell$, letting $r\rightarrow\infty$ along the selected
subsequence and using~\eqref{eq:dual_subseq_conv} and
\eqref{eq:B_op_convergence}, we obtain
\begin{equation}
	\inner{\bm v_\ell}
	{\Zcal_\infty\bm v_\ell}_{\Hcal}
	\ge0.
	\label{eq:fixed_l_nonnegative}
\end{equation}
Then, letting $\ell\rightarrow\infty$ and using
\eqref{eq:v_strong_conv} gives
\begin{equation}
	\inner{\bm v_\infty}
	{\Zcal_\infty\bm v_\infty}_{\Hcal}
	\ge0,
	\label{eq:limit_nonnegative}
\end{equation}
which contradicts~\eqref{eq:limit_negative}. Therefore, $\delta_m\rightarrow0$.

Finally, by applying Proposition~\ref{prop:gap} and $\delta_m\rightarrow0$, we obtain \eqref{eq:pm_to_pinf}. This completes the proof.

\bibliographystyle{IEEEtran}
\bibliography{ref}

@article{zhao2026crbET,
	title={{Cramér-Rao Bound Optimization for Near-Field ISAC with Extended Targets}},
	author={Zhao, Zongyao and Wang, Zhaolin and Han, Lincong and Xu, Liang and Jin, Jing and Liu, Yuanwei and Huang, Kaibin},
	journal={arXiv preprint arXiv:2604.18166},
	year={2026}
}

@ARTICLE{wang2024tutorial,
	author={Wang, Zhe and Zhang, Jiayi and Du, Hongyang and Niyato, Dusit and Cui, Shuguang and Ai, Bo and Debbah, M{\'e}rouane and Letaief, Khaled B. and Poor, H. Vincent},
	journal={IEEE Commun. Surveys Tuts.}, 
	title={{A Tutorial on Extremely Large-Scale MIMO for 6G: Fundamentals, Signal Processing, and Applications}}, 
	year={2024},
	volume={26},
	number={3},
	pages={1560-1605},
	doi={10.1109/COMST.2023.3349276}}

@ARTICLE{guo2026deep,
	author={Guo, Jia and Liu, Yuanwei and Shin, Hyundong and Nallanathan, Arumugam},
	journal={IEEE Trans. Wireless Commun.}, 
	title={{Deep Learning for Beamforming in Multi-User Continuous Aperture Array Systems}}, 
	year={2026},
	volume={25},
	number={},
	pages={7773-7791},
	doi={10.1109/TWC.2025.3633569}}

@ARTICLE{zhang2026transformer,
	author={Zhang, Shumeng and Zhao, Yun and Zhang, Deyou and Chen, Maoxuan and Li, Qingchao and Liu, Yuanwei},
	journal={IEEE Wireless Commun. Lett.}, 
	title={{Transformer-Enabled Beamforming for Continuous Aperture Array Broadcast Communication Systems}}, 
	year={2026},
	volume={15},
	number={},
	pages={3402-3406},
	doi={10.1109/LWC.2026.3699435}}

@ARTICLE{ranasinghe2026doubly,
	author={Ranasinghe, Kuranage Roche Rayan and Wang, Zhaolin and Rou, Hyeon Seok and Abreu, Giuseppe Thadeu Freitas de and Björnson, Emil},
	journal={IEEE Trans. Wireless Commun.}, 
	title={{Doubly-Dispersive Continuous MIMO Systems: Channel Modeling and Beamforming Design}}, 
	year={2026},
	volume={25},
	number={},
	pages={15441-15458},
	doi={10.1109/TWC.2026.3682880}}

@ARTICLE{ouyang2025diversity,
	author={Ouyang, Chongjun and Wang, Zhaolin and Zhang, Xingqi and Liu, Yuanwei},
	journal={IEEE Trans. Wireless Commun.}, 
	title={{Diversity and Multiplexing for Continuous-Aperture Array (CAPA)-Based Communications}}, 
	year={2025},
	volume={24},
	number={10},
	pages={8662-8680},
	doi={10.1109/TWC.2025.3568247}}

@ARTICLE{zhao2025continuous,
	author={Zhao, Boqun and Ouyang, Chongjun and Zhang, Xingqi and Liu, Yuanwei},
	journal={IEEE Trans. Wireless Commun.}, 
	title={{Continuous-Aperture Array (CAPA)-Based Wireless Communications: Capacity Characterization}}, 
	year={2025},
	volume={24},
	number={12},
	pages={10456-10473},
	doi={10.1109/TWC.2025.3579677}}

@ARTICLE{rezvani2025energy,
	author={Rezvani, Maryam and Adve, Raviraj and Bin Sediq, Akram and El-Keyi, Amr},
	journal={IEEE J. Sel. Areas Inf. Theory}, 
	title={{Energy Efficient Wireless Communications by Harnessing Huygens’ Metasurfaces}}, 
	year={2025},
	volume={6},
	number={},
	pages={85-99},
	doi={10.1109/JSAIT.2025.3566321}}

@ARTICLE{liu2025capa,
	author={Liu, Yuanwei and Ouyang, Chongjun and Wang, Zhaolin and Xu, Jiaqi and Mu, Xidong and Ding, Zhiguo},
	journal={IEEE Wireless Commun.}, 
	title={{CAPA}: Continuous-Aperture Arrays for Revolutionizing {6G} Wireless Communications}, 
	year={2025},
	volume={32},
	number={4},
	pages={38-45},
	doi={10.1109/MWC.001.2400493}}

@book{kreyszig1978introductory,
	title     = {Introductory Functional Analysis with Applications},
	author    = {Kreyszig, Erwin},
	year      = {1978},
	publisher = {John Wiley \& Sons},
	address   = {New York}
}

@ARTICLE{pizzo2022fourier,
	author={Pizzo, Andrea and Sanguinetti, Luca and Marzetta, Thomas L.},
	journal={IEEE Trans. Wireless Commun.}, 
	title={{Fourier Plane-Wave Series Expansion for Holographic MIMO Communications}}, 
	year={2022},
	volume={21},
	number={9},
	pages={6890-6905},
	doi={10.1109/TWC.2022.3152965}}

@ARTICLE{sanguinetti2023wavenumber,
	author={Sanguinetti, Luca and D’Amico, Antonio Alberto and Debbah, Merouane},
	journal={IEEE Trans. Wireless Commun.}, 
	title={{Wavenumber-Division Multiplexing in Line-of-Sight Holographic MIMO Communications}}, 
	year={2023},
	volume={22},
	number={4},
	pages={2186-2201},
	doi={10.1109/TWC.2022.3208961}}

@ARTICLE{zhang2023pattern,
	author={Zhang, Zijian and Dai, Linglong},
	journal={IEEE J. Sel. Areas Commun.}, 
	title={{Pattern-Division Multiplexing for Multi-User Continuous-Aperture MIMO}}, 
	year={2023},
	volume={41},
	number={8},
	pages={2350-2366},
	doi={10.1109/JSAC.2023.3288244}}

@ARTICLE{wang2025beamforming,
	author={Wang, Zhaolin and Ouyang, Chongjun and Liu, Yuanwei},
	journal={IEEE Trans. Wireless Commun.}, 
	title={{Beamforming Optimization for Continuous Aperture Array (CAPA)-Based Communications}}, 
	year={2025},
	volume={24},
	number={6},
	pages={5099-5113},
	doi={10.1109/TWC.2025.3545770}}

@ARTICLE{wang2025optimal,
	author={Wang, Zhaolin and Ouyang, Chongjun and Liu, Yuanwei},
	journal={IEEE Trans. Commun.}, 
	title={Optimal Beamforming for Multi-User Continuous Aperture Array (CAPA) Systems}, 
	year={2025},
	volume={73},
	number={10},
	pages={9207-9221},
	doi={10.1109/TCOMM.2025.3554644}}

@INPROCEEDINGS{wang2025continuous,
	author={Wang, Zhaolin and Ouyang, Chongjun and Zou, Yixuan and Liu, Yuanwei},
	booktitle={ICC 2025 - IEEE International Conference on Communications}, 
	title={{Continuous Aperture Array (CAPA) Beamforming: A Calculus of Variations Method}}, 
	year={2025},
	volume={},
	number={},
	pages={6267-6271},
	doi={10.1109/ICC52391.2025.11161275}}

@ARTICLE{zhang2025exploiting,
	author={Zhang, Yue and Ouyang, Chongjun and Shan, Hangguan and Liu, Yuanwei and Zhou, Yong and Shi, Zhiguo},
	journal={IEEE Trans. Wireless Commun.}, 
	title={{Exploiting Continuous-Aperture Arrays in Integrated Sensing and Communication Systems}}, 
	year={2025},
	volume={24},
	number={11},
	pages={9539-9555},
	doi={10.1109/TWC.2025.3573872}}

@ARTICLE{zhang2025RateCRB,
	author={Zhang, Yue and Shan, Hangguan and Ouyang, Chongjun and Liu, Yuanwei and Shi, Zhiguo and Lin, Dong and Hou, Fen},
	journal={IEEE Trans. Wireless Commun.}, 
	title={{Continuous-Aperture Array for Integrated Sensing and Communication: Rate-CRB Tradeoff}}, 
	year={2026},
	volume={25},
	number={},
	pages={4215-4230},
	doi={10.1109/TWC.2025.3609401}}

@ARTICLE{zhao2026downlink,
	author={Zhao, Boqun and Ouyang, Chongjun and Zhang, Xingqi and Shin, Hyundong and Liu, Yuanwei},
	journal={IEEE Trans. Wireless Commun.}, 
	title={{Downlink and Uplink ISAC in Continuous-Aperture Array (CAPA) Systems}}, 
	year={2026},
	volume={25},
	number={},
	pages={3592-3609},
	doi={10.1109/TWC.2025.3605158}}

@ARTICLE{ye2026optimal,
	author={Ye, Junjie and Wang, Zhaolin and Liu, Yuanwei and Zhang, Peichang and Huang, Lei and Nallanathan, Arumugam},
	journal={IEEE Trans. Wireless Commun.}, 
	title={{Optimal Waveform Design for Continuous Aperture Array (CAPA)-Aided ISAC Systems}}, 
	year={2026},
	volume={25},
	number={},
	pages={19082-19098},
	doi={10.1109/TWC.2026.3702839}}

@ARTICLE{du2023integrated,
	author={Du, Zhen and Liu, Fan and Yuan, Weijie and Masouros, Christos and Zhang, Zenghui and Xia, Shuqiang and Caire, Giuseppe},
	journal={IEEE Trans. Wireless Commun.}, 
	title={{Integrated Sensing and Communications for V2I Networks: Dynamic Predictive Beamforming for Extended Vehicle Targets}}, 
	year={2023},
	volume={22},
	number={6},
	pages={3612-3627},
	doi={10.1109/TWC.2022.3219890}}

@ARTICLE{wang2024cramer,
	author={Wang, Yiqiu and Tao, Meixia and Sun, Shu},
	journal={IEEE Trans. Wireless Commun.}, 
	title={{Cram{\'e}r-Rao Bound Analysis and Beamforming Design for Integrated Sensing and Communication With Extended Targets}}, 
	year={2024},
	volume={23},
	number={11},
	pages={15987-16000},
	doi={10.1109/TWC.2024.3435864}}

@ARTICLE{wang20263d,
	author={Wang, Yiqiu and Tao, Meixia and Sun, Shu and Mo, Jianhua and Cao, Wei},
	journal={IEEE Trans. Wireless Commun.}, 
	title={{3D Extended Target Sensing in ISAC: Cramér-Rao Bound Analysis and Beamforming Design}}, 
	year={2026},
	volume={25},
	number={},
	pages={20242-20258},
	doi={10.1109/TWC.2026.3709579}}

@ARTICLE{zhao2026continuous,
	author={Zhao, Boqun and Ouyang, Chongjun and Zhang, Xingqi and Liu, Yuanwei},
	journal={IEEE Trans. Wireless Commun.}, 
	title={{Continuous-Aperture Array-Based ISAC Over Fading Channels}}, 
	year={2026},
	volume={25},
	number={},
	pages={22494-22508},
	doi={10.1109/TWC.2026.3726026}}

@ARTICLE{liu2026symbol,
	author={Liu, Hongli and Li, Qiang},
	journal={IEEE Wireless Commun. Lett.}, 
	title={{Symbol-Level Precoding for Continuous-Aperture ISAC Systems}}, 
	year={2026},
	volume={15},
	number={},
	pages={5616-5620},
	doi={10.1109/LWC.2026.3736092}}

@ARTICLE{Garcia2022crb,
	author={Garcia, Nil and Fascista, Alessio and Coluccia, Angelo and Wymeersch, Henk and Aydogdu, Canan and Mendrzik, Rico and Seco-Granados, Gonzalo},
	journal={IEEE Trans. Signal Process.}, 
	title={Cramér-Rao Bound Analysis of Radars for Extended Vehicular Targets With Known and Unknown Shape}, 
	year={2022},
	volume={70},
	number={},
	pages={3280-3295},
	doi={10.1109/TSP.2022.3183853}}
\end{document}